\definecolor{mygreen}{RGB}{80,180,0}
\newcommand{\Ewin}[1]{\authnote{[Ewin}{#1{\bf]}}{red}}
\newif\ifcount
\def\namedlabel#1#2{\begingroup
    #2%
    \def\@currentlabel{#2}%
    \phantomsection\label{#1}\endgroup
}
\newcommand{\BB}[1]{\mathbb{#1}}
\newcommand{\CC}[1]{\mathcal{#1}}
\newcommand{\eps}{\varepsilon}
\DeclareMathOperator*{\E}{{\mathbb{E}}}
\DeclareMathOperator*{\Var}{{\mathbb{V}}}
\DeclareMathOperator{\nnz}{nnz}
\DeclareMathOperator{\poly}{poly}
\DeclareMathOperator{\rank}{rank}
\newcommand{\bigO}[1]{\mathcal{O}\left( #1 \right)}
\newcommand{\bigOt}[1]{\widetilde{\mathcal{O}}\left( #1 \right)}
\newcommand{\tsample}{\mathcal{T}_{\textsc{sample}}}
\newcommand{\C}{\BB{C}}
\newcommand{\R}{\BB{R}}
\DeclareMathOperator{\SQ}{SQ}
\DeclareMathOperator{\Q}{Q}
\newcommand{\fr}{{\mathrm{F}}}
\DeclarePairedDelimiter{\nrm}{\lVert}{\rVert}
\DeclarePairedDelimiter{\abs}{\lvert}{\rvert}
\DeclarePairedDelimiter{\norm}{\lVert}{\rVert}
\DeclarePairedDelimiter\ket{\lvert}{\rangle}
\DeclarePairedDelimiterX\braket[2]{\langle}{\rangle}{#1 \delimsize\vert #2}
\newcommand{\underflow}[2]{\underset{\kern-60mm \overbrace{#1} \kern-60mm}{#2}}
\declaretheorem[numberwithin=section]{theorem}
\declaretheorem[sibling=theorem]{lemma}
\declaretheorem[sibling=theorem,name=Proposition]{proposition}
\declaretheorem[numbered=no,style=definition,name=Problem]{problem}
\declaretheorem[sibling=theorem]{claim}
\declaretheorem[sibling=theorem]{definition}
\declaretheorem[style=definition,sibling=theorem]{fact}
\declaretheorem[numbered=no,style=definition,name=Fact]{fact*}
\newtheorem*{theorem*}{Theorem}
\newtheorem*{corollary*}{Corollary}
\newtheorem*{proposition*}{Proposition}
\newtheorem*{lemma*}{Lemma}
\newtheorem*{claim*}{Claim}
\newtheorem*{problem*}{Problem}
\newtheorem*{definition*}{Definition}
\begin{document}
\title{An improved quantum-inspired algorithm for linear regression}

\author{András Gilyén}
\email{gilyen@renyi.hu}
\affiliation{Alfréd Rényi Institute of Mathematics}
\thanks{Formerly at the Institute for Quantum Information and Matter, California Institute of Technology.}
\author{Zhao Song}
\email{zsong@adobe.com}
\affiliation{Adobe Research}
\author{Ewin Tang}
\affiliation{University of Washington}
\email{ewint@cs.washington.edu}
\orcid{0000-0002-7451-9687}

\maketitle

\begin{abstract}
    We give a classical algorithm for linear regression analogous to the quantum matrix inversion algorithm [Harrow, Hassidim, and Lloyd, Physical Review Letters'09] for low-rank matrices [Wossnig, Zhao, and Prakash, Physical Review Letters'18], when the input matrix $A$ is stored in a data structure applicable for QRAM-based state preparation.

Namely, suppose we are given an $A \in \mathbb{C}^{m\times n}$ with minimum non-zero singular value $\sigma$ which supports certain efficient $\ell_2$-norm importance sampling queries, along with a $b \in \mathbb{C}^m$.
Then, for some $x \in \mathbb{C}^n$ satisfying $\|x - A^+b\| \leq \varepsilon\|A^+b\|$, we can output a measurement of $\ket{x}$ in the computational basis and output an entry of $x$ with classical algorithms that run in $\tilde{\mathcal{O}}\big(\frac{\|A\|_{\mathrm{F}}^6\|A\|^6}{\sigma^{12}\varepsilon^4}\big)$ and $\tilde{\mathcal{O}}\big(\frac{\|A\|_{\mathrm{F}}^6\|A\|^2}{\sigma^8\varepsilon^4}\big)$ time, respectively.
This improves on previous ``quantum-inspired'' algorithms in this line of research by at least a factor of $\frac{\|A\|^{16}}{\sigma^{16}\varepsilon^2}$ [Chia, Gily\'{e}n, Li, Lin, Tang, and Wang, STOC'20].
As a consequence, we show that quantum computers can achieve at most a factor-of-12 speedup for linear regression in this QRAM data structure setting and related settings.
Our work applies techniques from sketching algorithms and optimization to the quantum-inspired literature. 
Unlike earlier works, this is a promising avenue that could lead to feasible implementations of classical regression in a quantum-inspired settings, for comparison against future quantum computers.
\end{abstract}
\newpage


\section{Introduction}
An important question for the future of quantum computing is whether we can use quantum computers to speed up machine learning~\cite{preskill2018QuantCompNISQEra}.
Answering this question is a topic of active research~\cite{childs09,aaronson15read,biamonte2016QuantumMachineLearning}.
One potential avenue to an affirmative answer is through quantum linear algebra algorithms, which can perform linear regression~\cite{harrow2009QLinSysSolver} (and compute similar linear algebra expressions~\cite{gilyen2018QSingValTransf}) in time poly-logarithmic in input dimension, with some restrictive ``quantum'' assumptions~\cite{childs09,aaronson15read}.
For machine learning tasks, a natural setting of these assumptions is that the input is low-rank, in which case the above quantum algorithms require that the input is given in a manner allowing efficient quantum state preparation.
This state preparation assumption is typically satisfied via a data structure instantiated with quantum random access memory (QRAM)\footnote{In this paper, we always use QRAM in combination with a data structure used for efficiently preparing states $\ket{0} \to \sum_i \frac{x_i}{\|x\|}\ket{i}$ corresponding to vectors $x\in\C^n$.}~\cite{giovannetti2007QuantumRAM,prakash2014QLinAlgAndMLThesis}.
A recent paper shows that, by exploiting these assumptions, classical algorithms can also perform quantum linear algebra with only a polynomial slowdown~\cite{chia2019SampdSubLinLowRankFramework}, meaning that here quantum computers do not give the exponential speedup in dimension that one might hope for.
However, these algorithms have running times with high exponents, leaving open the possibility that quantum computers admit large polynomial speedups for these problems.
For example, given our current understanding, the quantum recommendation systems algorithm~\cite{kerenidis2016QRecSys}, with running time $\mathcal{O}^*\!\big(\frac{\|A\|_\fr}{\sigma}\big)$, could still give a large polynomial quantum speedup, since the best-known classical algorithm has running time $\tilde{\mathcal{O}}^*\!\big(\frac{\|A\|_\fr^6\|A\|^{16}}{\sigma^{24}}\big)$~\cite{chia2019SampdSubLinLowRankFramework} (improving the original running time of $\tilde{\mathcal{O}}^*\!\big(\big(\frac{\|A\|_\fr}{\sigma}\big)^{\!24}\big)$~\cite{tang2018QuantumInspiredRecommSys}).\footnote{We define $\tilde{\mathcal{O}}(T)$ as $\mathcal{O}(T\cdot \mathrm{polylog}(T))$, and define $\mathcal{O}^*$ to be big $\mathcal{O}$ notation, hiding polynomial dependence on $\eps$ and poly-logarithmic dependence on dimension.}
It is an open question \emph{whether any quantum linear algebra algorithm admits a large polynomial speedup}, compared to classical numerical linear algebra~\cite{preskill2018QuantCompNISQEra,kllp19,kerenidis2017QGradDesc}.

We focus on this open question for the problem of low-rank linear regression, where we are given a matrix $A \in \C^{m\times n}$ with minimum singular value $\sigma$ and a vector $b \in \C^m$, and asked to approximate $x^* := \arg\min_x \frac{1}{2}\|Ax - b\|_2^2$.
Linear regression has been an influential primitive for quantum machine learning (QML) \cite{biamonte2016QuantumMachineLearning} since the invention of Harrow, Hassidim, and Lloyd's algorithm (HHL) originally introduced for sparse $A$~\cite{harrow2009QLinSysSolver,dhmsuw18}.
The low-rank variant that we consider commonly appears in the QML literature \cite{wossnig2018QLinSysAlgForDensMat,rebentrost2014QSVM}.
These algorithms, like all the algorithms we consider, are called ``low-rank'' because their runtimes depend on $\frac{\|A\|_\fr^2}{\sigma^2} \geq \rank A$, so in some sense, good performance requires the matrix $A$ to be strictly low-rank.
However, this restriction can be relaxed with regularization, which replaces $\sigma^2$ by $\sigma^2 + \lambda$, so that for reasonable choices of the regularization parameter $\lambda$, $A$ only needs to be approximately low-rank.
The current state-of-the-art quantum algorithm can produce a state $\eps$-close to $\ket{A^+b}$ in $\ell_2$-norm in $\mathcal{O}^*\!\big(\frac{\|A\|_\fr}{\|A\|}\frac{\|A\|}{\sigma}\big)$ time~\cite{chakraborty2018BlockMatrixPowers}, given $A$ and $b$ in the aforementioned data structure.
We think about this running time as depending polynomially on the (square root of) stable rank $\frac{\|A\|_\fr}{\|A\|}$ and the condition number $\frac{\|A\|}{\sigma}$.
Note that being able to produce quantum states $\ket{x} := \frac{1}{\|x\|}\sum_i x_i\ket{i}$ corresponding to a desired vector $x$ is akin to a classical sampling problem, and is different from outputting $x$ itself.
Though recent ``quantum-inspired'' classical algorithms~\cite{chia2020QuantInsLinEqSolving} demonstrate that this quantum algorithm does not admit exponential speedups, the best previous classical algorithm of Chia, Gily\'{e}n, Li, Lin, Tang and Wang \cite{chia2019SampdSubLinLowRankFramework} runs in $\tilde{\mathcal{O}}^*\!\big(\big(\frac{\|A\|_\fr}{\|A\|}\big)^{\!6}\big(\frac{\|A\|}{\sigma}\big)^{\!28}\big)$ time, leaving significant room for \emph{polynomial} quantum speedups.\footnote{The $\eps$ dependence for the quantum algorithm is $\log\frac{1}{\eps}$, compared to the classical algorithm which gets $\frac{1}{\eps^6}$. While this suggests an exponential speedup in $\eps$, it appears to only hold for sampling problems, such as measuring $\ket{A^+b}$ in the computational basis. Learning information from the output quantum state $\ket{A^+b}$ generally requires $\poly(\frac{1}{\eps})$ samples, preventing exponential separation for computational problems.}

Our main result tightens this gap, giving an algorithm running in
$\mathcal{O}^*\!\big(\big(\frac{\|A\|_\fr}{\|A\|}\big)^{\!6}\big(\frac{\|A\|}{\sigma}\big)^{\!12}\big)$ time.
Roughly, this tightens the separation between quantum and classical from 1-to-28 to 1-to-12.
As a bonus, this algorithm is a factor of $\frac{\|A\|^4}{\sigma^4}$ faster if we want to compute an entry of the output, rather than to sample an index.
Our algorithm is a carefully-analyzed instance of stochastic gradient descent that exploits the sampling access provided in our setting, combined with a technical sketching step to enforce sparsity of the input vector $b$.
Since it is an iterative algorithm, it is potentially more practical to implement and use as a benchmark against future scalable quantum computers~\cite{adbl20}.
Our result suggests that other tasks that can be solved through iterative methods, like primal-dual algorithms, may have quantum-classical gaps that are smaller than existing quantum-inspired results suggest.

\subsection{Our model}
We now introduce the input and output model that we and all prior quantum-inspired algorithms in this line of research use~\cite{tang2018QInspiredClassAlgPCA,chia2019SampdSubLinLowRankFramework}.
The motivation for this model is to be a classical analogue to the input model of QML algorithms: many such algorithms assume that input is stored in a particular data structure in QRAM \cite{wossnig2018QLinSysAlgForDensMat,rebentrost2014QSVM,prakash2014QLinAlgAndMLThesis,biamonte2016QuantumMachineLearning,preskill2018QuantCompNISQEra,chiprsw18}, including current QML algorithms for low-rank linear regression (our focus)~\cite{chakraborty2018BlockMatrixPowers}.\footnote{Although current QRAM proposals suggest that quantum hardware implementing QRAM may be realizable with essentially only logarithmic overhead in the running time~\cite{giovannetti2007QuantumRAM}, an actual physical implementation would require substantial advances in quantum technology in order to maintain coherence for a long enough time~\cite{arunachalam2015RobustnessBuckBrigQRAM}.}
So, our comparison classical algorithms must also assume that input is stored in this ``quantum-inspired'' data structure, which supports several fully-classical operations.
We first define the quantum-inspired data structure for vectors (Definition~\ref{def:lengthsqvector}), then for matrices (Definition~\ref{def:lengthsqmat}).
\begin{definition}[Vector-based data-structure, $\SQ(v)$ and $\Q(v)$]\label{def:lengthsqvector}
    For any vector $v\in \C^n$, let $\SQ(v)$ denote a data-structure that supports the following operations:
    \begin{enumerate}
        \item \textsc{Sample}$()$, which outputs the entry $i$ with probability $|v_i|^2 / \| v \|^2$. 
        \item \textsc{Query}$(i)$, which takes $i \in [n]$ as input and outputs $v_i$. 
        \item \textsc{Norm}$()$, which outputs $\| v \|$.\footnote{When we claim that we can call \textsc{Norm}() on output vectors, we will mean that we can output a constant approximation: a number in $[0.9\|v\|, 1.1\|v\|]$.}
    \end{enumerate}
    Let $\mathcal{T}(v)$ denote the max time it takes for the data structure to respond to any query.
    If we only allow the \textsc{Query} operation, the data-structure is called $\Q(v)$.
\end{definition}
Notice that the \textsc{Sample} function is the classical analogue of quantum state preparation, since the distribution being sampled is identical to the one attained by measuring $\ket{v}$ in the computational basis.
We now define the quantum-inspired data structure for matrices.
\begin{definition}[Matrix-based data-structure, $\SQ(A)$]\label{def:lengthsqmat}
    For any matrix $A \in \C^{m \times n}$, let $\SQ(A)$ denote a data-structure that supports the following operations:
    \begin{enumerate}
        \item \textsc{Sample1}$()$, which outputs $i \in [m]$ with probability $\| A_{i,*} \|^2 / \| A \|_\fr^2$. 
        \item \textsc{Sample2}$(i)$, which takes a row index $i \in [m]$ as input and outputs the column index $j \in [n]$ with probability $| A_{i,j} |^2 / \| A_{i,*} \|^2$. 
        \item \textsc{Query}$(i,j)$, which takes $i \in [m]$ and $j \in [n]$ as input and outputs $A_{i,j}$. 
        \item \textsc{Norm}$(i)$, which takes $i \in [m]$ as input and outputs $\| A_{i,*} \|$. 
        \item \textsc{Norm}$()$, which outputs $\| A \|_\fr$. 
    \end{enumerate}
    Let $\mathcal{T}(A)$ denote the max time the data structure takes to respond to any query.
\end{definition}

For the sake of simplicity, we will assume all input $\SQ$ data structures respond to queries in $\bigO{1}$ time.
There are data structures that can do this in the word RAM model {\cite[Remark~2.15]{chia2019SampdSubLinLowRankFramework}}.
The dynamic data structures that commonly appear in the QML literature~\cite{kerenidis2016QRecSys} respond to queries in $\bigO{\log(mn)}$ time, so using such versions only increases our running time by a logarithmic factor.

The specific choice of data structure does not appear to be important to the quantum-classical separation, though.
The QRAM and data structure assumptions of typical QML algorithms can be replaced with any {\em state preparation assumption}, which is an assumption implying that quantum states corresponding to input data can be prepared in time polylogarithmic in dimension.
However, to the best of our knowledge, all models admitting efficient protocols that take $v$ stored in the standard way as input and output the quantum state $\ket{v}$ also admit corresponding efficient classical sample and query operations that can replace the data structure described above {\cite[Remark~2.15]{chia2019SampdSubLinLowRankFramework}}.
In other words, classical algorithms in the sample and query access model can be run whenever the corresponding QML algorithm can, assuming that the input data is classical.
Consequently, our results on the quantum speedup for low-rank matrix inversion appear robust to changing quantum input models.

\paragraph{Notations.}
For a vector $v \in \C^n$, $\| v \|$ denotes $\ell_2$ norm.
For a matrix $A \in \C^{m\times n}$, $A^\dagger$, $A^+$, $\| A \|$, and $\|A\|_\fr$ denote the conjugate transpose, pseudoinverse, operator norm, and Frobenius norm of $A$, respectively.
We use $A_{i,j}$ to denote the entry of $A$ at the $i$-th row and $j$-th column.
We use $A_{i,*}$ and $A_{*,j}$ to denote the $i$-th row and $j$-th column of $A$.

We use $\E[\cdot]$ and $\Var[\cdot]$ to denote expectation and variance of a random variable.
Abusing notation, for a random vector $v$, we denote $\Var[v] = \E[\|v - \E[v]\|^2]$.
For a differentiable function $f: \R^n \to \R$, $\nabla f$ denotes the gradient of $f$.
We say $f$ is convex if, for all $x, y \in \R^n$ and $t \in [0,1]$, $f(tx + (1-t)y) \leq tf(x) + (1-t)f(y)$.

We use the shorthand $g \lesssim h$ for $g = \bigO{h}$, and respectively $g \gtrsim h$ and $g \eqsim h$ for $g = \Omega(h)$ and $g = \Theta(h)$.
$\tilde{\mathcal{O}}(T)$ denotes $\mathcal{O}(T\cdot \mathrm{polylog}(T))$ and $\mathcal{O}^*$ denotes big $\mathcal{O}$ notation that hides polynomial dependence on $\eps$ and poly-logarithmic dependence on dimension.
Note that the precise exponent on $\log(mn)$ depends on the choice of word size in our RAM models.
Quantum and classical algorithms have different conventions for this choice, hence we use notation that elides this detail for simplicity.

\subsection{Our results}\label{sec:our_result}

In this paper, we focus on solving the following problem.
Prior work in this line of research has studied this problem in the $\lambda = 0$ case~\cite{chia2019SampdSubLinLowRankFramework}. 

\begin{problem}[Regression with regularization] \label{prob:reg}
Given $A \in \C^{m\times n}$, $b \in \C^m$, and a regularization parameter $\lambda \geq 0$, we define the function $f : \C^n \rightarrow \R$ as
\begin{align*}
    f(x) := \frac{1}{2}(\|Ax - b\|^2 + \lambda\|x\|^2).
\end{align*}
Let\footnote{Note that for $\lambda=0$ the function $f$ might not be strictly convex and so the minimizer is not necessarily unique. Nevertheless, on the image of $A^\dagger$, $f$ is strictly convex. Since we will search a solution within the image of $A^\dagger$, we implicitly restrict $f$ to this subspace, thus we can assume without loss of generality that $f$ is strictly convex.\label{foot:ImageRestriction}} $x^* := \arg\min_{x \in \C^n } f(x) = (A^\dagger A + \lambda I)^+ A^\dagger b$.
\end{problem}
\noindent
We will manipulate vectors by manipulating their \emph{sparse descriptions}, defined as follows:
\begin{definition}[sparse description] \label{def:sparse-description}
    We say we have an $s$-sparse description of $x \in \C^d$ if we have an $s$-sparse $v \in \C^n$ such that $x = A^\dagger v$.
    We use the convention that any $s$-sparse description is also a $t$-sparse description for all $t \geq s$.
\end{definition}
\noindent
Our main result is that we can solve regression efficiently, assuming $\SQ(A)$. \Cref{alg:SGDHHL} and the corresponding Theorem~\ref{main-nonsparse} directly improves the previous best result \cite{chia2019SampdSubLinLowRankFramework} due to Chia, Gily\'{e}n, Li, Lin, Tang and Wang by a factor of $\frac{ \| A \|^{20} }{ \sigma^{20} \eps^2 }$.

\begin{algorithm}
    \caption{Quantum-inspired regression via stochastic gradient descent}\label{alg:SGDHHL}
    \begin{algorithmic}[1]
        \STATEx {\bf Input:} $\SQ(A)$, $\Q(b)$, $\eps$, and $\lambda = \bigO{\nrm{A}_F\nrm{A}}$.
        \STATE {\bf Init: } Sparsify $b$ as described in \Cref{mm-approx} 
        to obtain an $s=800\frac{\|A\|_\fr^2\|b\|^2}{(\sigma^2+\lambda)^2\eps^2\|x^*\|^2}$-sparse $\hat{b}$.\label{line:init}\\
        Set $v^{(0)}:=0$ (and implicitly define $x^{(t)}=A^\dagger v^{(t)}$).\\
        Set $\eta := \frac{\eps^2(\sigma^2+\lambda)}{32\|A\|_\fr^2\|A\|^2 + 16\lambda^2}$
        and $T := \frac{\ln(8/\eps^2)}{\eta  (\sigma^2 + \lambda)}=32\ln\left(\frac{\sqrt{8}}{\eps}\right)\frac{2\|A\|_\fr^2\|A\|^2 + \lambda^2}{\eps^2(\sigma^2+\lambda)^2}$.
        \FOR{$t=0,1,\ldots, T-1$}
            \STATE Sample a row index $r$ according to the row norms $\frac{\nrm{A_{r,*}}^2}{\nrm{A}_F^2}$.\label{line:rowSample}
            \STATE Sample $C:=\frac{\|A\|_\fr^2}{\|A\|^2}$ column indices $\{c_i\}_{i \in [C]}$ i.i.d.\ according to $\frac{|A_{r,c}|^2}{\nrm{A_{r,*}}^2}$.\label{line:columnSample}
            \STATE Define $v^{(t+1)}$ as follows:
        \begin{align} \label{eqn:v-updateSimplified}
        v^{(t+1)} := (1-\eta \lambda)v^{(t)} + \eta\hat{b} - \eta \frac{\|A\|_\fr^2}{\|A_{r,*}\|^2}\Big(\frac1C\sum_{j=1}^C\frac{\|A_{r,*}\|^2}{|A_{r,c_j}|^2} A_{r,c_j}(A_{*,c_j})^\dagger v^{(t)}\Big)e_r .
        \end{align}
        \ENDFOR
        \STATE{\bf Output:} $v^{(T)}$ which is $32\frac{2\|A\|_\fr^2\|A\|^2 + \lambda^2}{(\sigma^2 + \lambda)^2\eps^2}\left(\frac{25\|b\|^2}{2\|A\|^2\|x^*\|^2}+\ln\left(\frac{\sqrt{8}}{\eps}\right)\right)$-sparse; set  $x:=A^\dagger v^{(T)}$.
        \STATEx{\bf Queries to $x$:} $x_j = \sum A^\dagger_{j,i}v^{(T)}_i$, so query $A_{i,j}$ for all non-zero $v^{(T)}_i$ and compute the sum.
        \STATEx{\bf Sampling from $|x_j|^2/\nrm{x}^2$:} Perform rejection sampling according to \Cref{lem:matvec}.
    \end{algorithmic}
\end{algorithm}

\begin{theorem}[Main result]\label{main-nonsparse}
Suppose we are given $\SQ(A) \in \C^{m\times n}$, $\Q(b) \in \C^n$.
Denote $\sigma := \|A^+\|^{-1}$ and consider $f(x)$ for $\lambda \lesssim \|A\|_\fr\|A\|$.
\Cref{alg:SGDHHL} runs in
\begin{align*}
    \bigO{\frac{\|A\|_\fr^6\|A\|^2}{(\sigma^2+\lambda)^4\eps^4}\Big( \frac{\|b\|^2}{\|A\|^2\|x^*\|^2} + \log\frac{1}{\eps} \Big) \log\frac{1}{\eps}}
\end{align*}
time and outputs an $\bigO{\frac{\|A\|_\fr^2\|A\|^2}{(\sigma^2+\lambda)^2\eps^2}(\frac{\|b\|^2}{\|A\|^2\|x^*\|^2} + \log\frac{1}{\eps})}$-sparse description of an $x$ such that $\|x - x^*\| \leq \eps\|x^*\|$ with probability $\geq 0.9$.
This description admits $\SQ(x)$ for 
\begin{align*}
    \mathcal{T}(x) = \bigO{\frac{\|A\|_\fr^6\|A\|^6}{(\sigma^2+\lambda)^6\eps^4}\log^2\frac{1}{\eps}\Big(\frac{\|b\|^4}{\|A\|^4\|x^*\|^4} + \log^2\frac{1}{\eps}\Big)\Big(\frac{\|b\|^2}{\|A\|_\fr^2\|x^*\|^2} + 1\Big)}.
\end{align*}
\end{theorem}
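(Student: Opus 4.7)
The plan is to analyze Algorithm~\ref{alg:SGDHHL} as stochastic gradient descent on the modified objective $\hat f(x) := \tfrac12(\|Ax - \hat b\|^2 + \lambda\|x\|^2)$ in which $b$ has been replaced by the importance-sampled sparsification $\hat b$, and then absorb the error from this replacement at the end. The maintained vector $v^{(t)}\in\C^m$ encodes $x^{(t)} := A^\dagger v^{(t)}$, so the update on $v$ induces the update $x^{(t+1)} = x^{(t)} - \eta A^\dagger \tilde g^{(t)}$. First I would verify by direct calculation that, conditioned on $v^{(t)}$, $A^\dagger \tilde g^{(t)}$ is an unbiased estimator of $\nabla \hat f(x^{(t)}) = A^\dagger A x^{(t)} + \lambda x^{(t)} - A^\dagger \hat b$: the row sampling with weights $\|A_{r,*}\|^2/\|A\|_\fr^2$ exactly compensates the $\|A\|_\fr^2/\|A_{r,*}\|^2$ factor, and the inner column sampling with weights $|A_{r,c}|^2/\|A_{r,*}\|^2$ compensates the $\|A_{r,*}\|^2/|A_{r,c}|^2$ factor, so $\E[\cdot]$ collapses to $A_{r,*}x^{(t)}$ and then to $A^\dagger A x^{(t)}$.

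Next I would bound the second moment $\E[\|A^\dagger \tilde g^{(t)}\|^2]$, which drives the SGD error floor. Splitting the squared norm into the (deterministic) $\lambda x^{(t)} - A^\dagger\hat b$ piece and the stochastic piece, the latter decomposes by total variance into the inner-sample variance (averaged across $C$ column samples) plus the outer-row variance. The column-sampling variance for a fixed row $r$ is at most $\tfrac1C \|A_{r,*}\|^2\|x^{(t)}\|^2$; summing against the $r$-weights produces a contribution $\tfrac{\|A\|_\fr^4}{C}\|x^{(t)}\|^2$, and the row-variance contribution is $\|A\|_\fr^2\|Ax^{(t)}\|^2 \leq \|A\|_\fr^2\|A\|^2\|x^{(t)}\|^2$. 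The choice $C = \|A\|_\fr^2/\|A\|^2$ equalizes these, yielding the clean bound $\E[\|A^\dagger\tilde g^{(t)}\|^2]\lesssim (\|A\|_\fr^2\|A\|^2 + \lambda^2)\|x^{(t)}\|^2 + \|A^\dagger\hat b\|^2$, which for $\|x^{(t)}\|$ comparable to $\|x^*\|$ is $O((\|A\|_\fr^2\|A\|^2+\lambda^2)\|x^*\|^2)$.

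With this variance bound in hand I would apply the standard SGD analysis for $(\sigma^2+\lambda)$-strongly-convex objectives on the image of $A^\dagger$ (cf.\ Footnote~\ref{foot:ImageRestriction}): the one-step recursion
\begin{equation*}
    \E\bigl[\|x^{(t+1)} - \hat x^*\|^2\bigr] \leq (1 - \eta(\sigma^2+\lambda))\,\E\bigl[\|x^{(t)} - \hat x^*\|^2\bigr] + \eta^2 G^2
\end{equation*}
telescopes to an $(1-\eta(\sigma^2+\lambda))^T\|\hat x^*\|^2 + \eta G^2/(\sigma^2+\lambda)$ bound. The stated choices of $\eta$ and $T$ make the first term $\tfrac{\eps^2}{8}\|x^*\|^2$ and the second $\tfrac{\eps^2}{8}\|x^*\|^2$. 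To pass from $\hat x^*$ back to $x^*$ I would invoke a stability estimate $\|\hat x^* - x^*\| \leq (\sigma^2+\lambda)^{-1}\|A^\dagger(b-\hat b)\|$, and use the matrix-multiplication sparsification guarantee of \Cref{mm-approx} to show that $s = \Theta(\|A\|_\fr^2\|b\|^2/((\sigma^2+\lambda)^2\eps^2\|x^*\|^2))$ samples suffice for $\|\hat x^* - x^*\| \leq \tfrac{\eps}{2}\|x^*\|$. Markov then converts the expectation bound into the claimed $0.9$ success probability (after a constant adjustment in $\eps$).

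Finally I would tally sparsity, running time, and $\SQ(x)$ access. Each iteration scales $v^{(t)}$, adds a multiple of $\hat b$ (contributing at most $s$ nonzeros in total over all iterations, since all $\hat b$ terms share support), and modifies one coordinate $e_r$, so $\|v^{(T)}\|_0 \leq s + T$, matching the advertised sparsity. A single iteration samples one row and $C$ columns, then evaluates $(A_{*,c_j})^\dagger v^{(t)}$ on the $s{+}T$-sparse support, giving per-iteration cost $\tilde O(C(s+T))$ and total cost $\tilde O(T\cdot C\cdot (s+T))$, which simplifies to the stated runtime. Sample and query access to $x = A^\dagger v^{(T)}$ follows from \Cref{lem:matvec} by rejection sampling through $A^\dagger$ against the sparse $v^{(T)}$, whose expected cost inherits an extra $\|A\|_\fr^2/\|A\|^2$ factor per sample together with the $\|v^{(T)}\|_0$ support cost. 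The main technical obstacle is the variance computation: the row/column resampling structure makes the estimator correlated, and carrying $\|x^{(t)}\|$ rather than $\|x^*\|$ through the recursion requires inductively tracking that iterates stay $O(\|x^*\|)$-bounded, which is what allows the second-moment bound to be applied uniformly and thus the SGD recursion to close.
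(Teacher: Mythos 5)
Your outline of the SGD backbone, the sparsification of $b$ via \Cref{mm-approx}, the unbiasedness and variance computations for the stochastic gradient, the choice $C=\|A\|_\fr^2/\|A\|^2$ that balances the row- and column-sampling variance, and the per-iteration cost $\tilde O\bigl(C(s+T)\bigr)$ and sparsity $\|v^{(T)}\|_0\le s+T$ all match the paper's argument. However, there are two places where the proposal differs in ways that matter.

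First, your one-step recursion $\E[\|x^{(t+1)}-\hat x^*\|^2]\le(1-\eta(\sigma^2+\lambda))\E[\|x^{(t)}-\hat x^*\|^2]+\eta^2 G^2$ with a fixed constant $G^2$ is the textbook SGD analysis, and the paper explicitly identifies this as one of the things that fails here: the variance of $\nabla g$ scales with $\|x^{(t)}\|^2$, not with a fixed quantity, and projecting back to a bounded ball is not affordable. You notice this yourself at the end and say it can be patched by inductively keeping $\|x^{(t)}\|=O(\|x^*\|)$, but this is the hard part of your route, and you do not carry it out. The paper sidesteps the need for any such induction by writing $\E[\|\nabla g(x)\|^2]\le 2\E[\|\nabla g(x)-\nabla g(x^*)\|^2]+2\E[\|\nabla g(x^*)\|^2]$, which yields a term proportional to $\|x-x^*\|^2$ (absorbed into the contraction factor) plus a fixed residual $\Var[\nabla g(x^*)]$. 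That decomposition makes the recursion close directly in $\delta_t=\E[\|x^{(t)}-x^*\|^2]$ with no side induction. So your plan is not wrong, but the claim that the ``standard'' analysis applies is misleading: you need the Bach--Moulines-style residual decomposition, or else the uniform-$G^2$ step needs its own inductive argument that you would have to supply.

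Second, and more seriously, the $\SQ(x)$ running time is not an automatic consequence of \Cref{lem:matvec} plus the sparsity of $v^{(T)}$. The cost in \Cref{lem:matvec} scales with $d^2\Delta$ where $\Delta=\sum_i\|A_{i,*}\|^2|v^{(T)}_i|^2/\|x^{(T)}\|^2$, and $\Delta$ can blow up if there is massive cancellation in the product $A^\dagger v^{(T)}$ --- this is precisely the concern the paper raises (``we need to rule out the possibility that our output $x$ is given as a description $A^\dagger\tilde x$ such that $\tilde x$ is much larger than $x$''). Your proposal describes the sampling cost as ``an extra $\|A\|_\fr^2/\|A\|^2$ factor per sample together with the $\|v^{(T)}\|_0$ support cost,'' which simply omits $\Delta$. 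The paper spends the bulk of \cref{sec:sq-out} bounding $\Delta$: it defines the weighted norm $\|v\|_D$ with $D_{ii}=\|A_{i,*}\|$, proves a variance bound for the $v$-space gradient $\nabla\tilde g$ in this norm (\Cref{lem:variance-variant}), sets up a recurrence on $\E[\|v^{(t)}\|_D]$, and combines it with a Markov bound and the SGD error bound to obtain $\Delta\lesssim\frac{\log^2(1/\eps)}{(\sigma^2+\lambda)^2}\bigl(\frac{\|A\|^2\|b\|^2}{\|x^*\|^2}+\|A\|_\fr^2\|A\|^2\bigr)$. That bound is what produces the somewhat intricate expression for $\mathcal T(x)$ in the theorem. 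Without a bound on $\Delta$, the stated $\SQ(x)$ complexity does not follow, so this is a genuine gap you would need to fill.
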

Note that the running time stated for $\mathcal{T}(x)$ corresponds to the running time of \textsc{Sample} and \textsc{Norm}; the running time of \textsc{Query} is $\bigO{\frac{\|A\|_\fr^2\|A\|^2}{(\sigma^2+\lambda)^2\eps^2}(\frac{\|b\|^2}{\|A\|^2\|x^*\|^2} + \log\frac{1}{\eps})}$, the sparsity of the description.
So, unlike previous quantum-inspired algorithms, it may be the case that the running time of the $\SQ$ query dominates, though it's conceivable that this running time bound is an artifact of our analysis.

Factors of $\frac{\|b\|^2}{\|A\|^2\|x^*\|^2}$ (which is always at least one) arise because sampling error is additive with respect to $\|b\|$, and need to be rescaled relative to $\|x^*\|$; the quantum algorithm must also pay such a factor.
The expression $\frac{\|A\|^2\|x^*\|^2}{\|b\|^2}$ can be thought of as the fraction of $b$'s ``mass'' that is in the well-conditioned rowspace of $A$.
To make this rigorous, we define the following \emph{thresholded projector}.
\begin{definition} \label{def:threshproj}
    For an $A \in \mathbb{C}^{m\times n}$ with singular value decomposition $A = \sum \sigma_i u_iv_i^\dagger$ and $\lambda \geq 0$, we define $\Pi_{A,\lambda} = \sum p_{A,\lambda}(\sigma_i) u_iu_i^\dagger$ where
    \begin{align*}
        p_{A,\lambda}(\sigma) = \begin{cases}
            0 &  \sigma = 0 \\        
            \frac{2\sigma\sqrt{\lambda}}{\sigma^2+\lambda} & 0 < \sigma \leq \sqrt{\lambda} \\
            1 & \sqrt{\lambda} < \sigma.
        \end{cases}
    \end{align*}
\end{definition}
\noindent For intuition, $\Pi_{A,0}$ projects onto the rowspace of $A$, and for nonzero $\lambda$ smoothly projects away the $u_i$'s with singular values $\sigma_i$ smaller than $\sqrt{\lambda}$ roughly linearly, since $\frac{\sigma}{\sqrt{\lambda}} \leq p_{A,\lambda}(\sigma) \leq \frac{2}{\sqrt{\lambda}}$ for $\sigma \in [0,\sqrt{\lambda}]$.
We use the definition presented here because $\|\Pi_{A,\lambda} b\|$ gives a natural lower bound for $\|x^*\|$: by \cref{eqn:xstarLower}, $\frac{\|A\|^2\|x^*\|^2}{\|b\|^2} \geq \frac{\|\Pi_{A,\lambda}b\|^2}{2\|b\|^2}$ when $\lambda \leq \|A\|^2$.
This is the ``fraction of mass'' type quantity that one would expect.\footnote{Note that this also means $\frac{\|A\|^2\|x^*\|^2}{\|b\|^2} \geq \frac{\|Mb\|^2}{2\|b\|^2}$ for any matrix $M$ satisfying $M \preceq \Pi_{A,\lambda}$. In particular, it holds when $M$ projects onto only those $u_i$'s such that $\sigma_i \geq \sqrt{\lambda}$, i.e., when $M$ is a true projector onto the well-conditioned subspace of the rowspace of $A$.}

As mentioned previously, we use stochastic gradient descent to solve this problem, for $T:=\bigO{\frac{\|A\|_\fr^2\|A\|^2}{(\sigma^2+\lambda)^2\eps^2}\log\frac{1}{\eps}}$ iterations.
Such optimization algorithms are standard for solving regression in this setting~\cite{gs18,cjst20}: the idea is that, instead of explicitly computing the closed form of the minimizer to $f$, which requires computing a matrix pseudoinverse, one can start at $x^{(0)} = \vec{0}$ and iteratively find $x^{(t+1)}$ from $x^{(t)}$ such that the sequence $\{x^{(t)}\}_{t \in \mathbb{N}}$ converges to a local minimum of $f$.
Updating $x^{(t)}$ by nudging it in the direction of $f$'s gradient produces such a sequence, even when we only use decent (stochastic) estimators of the gradient~\cite{bubeck2015ConvexOpt}.
Because $f$ is convex, it only has one minimizer, so $x^{(t)}$ converges to the global minimum.\textsuperscript{\ref{foot:ImageRestriction}}

\paragraph{Challenges.}

Though the idea of SGD for regression is simple and well-understood, we note that \emph{some standard analyses fail in our setting, so some care is needed in analysing SGD correctly}.
\begin{itemize}
\item First, our stochastic gradient (\cref{def-grad}) has variance that depends on the norm of the current iterate, and projection onto the ball of vectors of bounded norm is too costly, so our analyses cannot assume a uniform bound on the second moment of our stochastic gradient.
\item Second, we want a bound for the final iterate $x^{(T)}$, not merely a bound on averaged iterates as is common for SGD analyses, since using an averaging scheme would complicate the analysis in \cref{sec:sq-out}.
\item Third, we want a bound on the output $x$ of our algorithm of the form $\|x - x^*\| \leq \eps\|x^*\|$.
We could have chosen to aim for the typical bound in the optimization literature of $f(x) - f(x^*) \leq \eps(f(x^{(0)}) - f(x^*))$, but our choice is common in the QML literature, and it only makes sense to have $\SQ(x)$ when $x$ is indeed close to $x^*$ up to multiplicative error.
\item Finally, note that acceleration via methods like the accelerated proximal point algorithm (APPA) \cite{fgks15} framework cannot be used in this setting, since we cannot pay the linear time in dimension necessary to recenter gradients.
\end{itemize}
The SGD analysis of Bach and Moulines~\cite{mb11} meets all of our criteria, which inspired the error analysis of this paper, while we use quantum-inspired ideas for the computation of the stochastic gradients (\cref{def-grad}).

The running time of SGD is only good for our choice of gradient when $b$ is sparse, but through a sketching matrix we can reduce to the case where $b$ has $\bigO{\frac{\|A\|_\fr^2\|A\|^2}{\sigma^4\eps^2}\frac{\|b\|^2}{\|AA^+b\|^2}}$ non-zero entries.
Additional analysis is necessary to argue that we have efficient sample and query access to the output, since we need to rule out the possibility that our output $x$ is given as a description $A^\dagger \tilde{x}$ such that $\tilde{x}$ is much larger than $x$, in which case computing, say, $\|x\|$ via rejection sampling would be intractable.

Our analysis of the stochastic gradient descent is completely self-contained and the analysis of the stochastic gradients only relies on a few standard results about quantum-inspired sampling techniques, so the paper shall be accessible even without having a background on convex optimization.

\subsection{Discussion}

\paragraph{Comparisons to prior work.}
First, we describe the prior work on quantum-inspired classical low-rank linear regression algorithms in more detail.
There are three prior papers here: \cite{gilyen2018QInsLowRankHHL} achieves a running time of $\tilde{\mathcal{O}}(\frac{\|A\|_\fr^6k^6\|A\|^{16}}{\sigma^{22}\eps^6})$ (where $k \leq \frac{\|A\|_\fr^2}{\sigma^2}$ is the rank of $A$), \cite{chia2018QInspiredSubLinLowRankLinEqSolver} achieves an incomparable running time of $\mathcal{O}(\frac{\|A\|_\fr^6\|A\|^{22}}{\sigma^{28}\eps^6})$, and \cite{chia2019SampdSubLinLowRankFramework} achieves the same running time as \cite{chia2018QInspiredSubLinLowRankLinEqSolver} for a more general problem (where $\sigma$ can be chosen to be a threshold, and isn't necessarily the minimum singular value of $A$).
For simplicity of exposition, we only compared our algorithm to the latter running time.
However, our algorithm improves on all previous running times: an improvement of $k^6\frac{\|A\|^{10}}{\sigma^{10}\eps^2}$ on the former, and $\frac{\|A\|^{16}}{\sigma^{16}\eps^2}$ on the latter.

\paragraph{Comparison to \cite{gs18}}
In some sense, this work gives a version of the regression algorithm of Gupta and Sidford~\cite{gs18} (which, alongside \cite{cjst20}, is the current state-of-the-art iterative algorithm for regression in many numerically sparse settings) that trades running time for weaker, more quantum-like input assumptions.
Their algorithm takes $A$ and $b$, with only query access to the input's nonzero entries, and explicitly outputs a vector $x$ satisfying $\|x - x^*\| \leq \eps\|x^*\|$ in $\tilde{\mathcal{O}}^*\!\big(\nnz(A) + \frac{\|A\|_\fr}{\sigma}\nnz(A)^{\frac{2}{3}}n^{\frac16}\big)$ time (shown here after naively bounding numerical sparsity by $n$).
They use a stochastic first-order iterative method, that we adapt by mildly weakening the stochastic gradient to be efficient assuming the quantum-inspired data structure (instead of the bespoke data structure they design for their algorithm).

To elaborate, we make a direct comparison.
Suppose we wish to apply our algorithm from \cref{main-nonsparse} to Gupta and Sidford's setting, meaning that we are given $A$ and $b$ as lists of non-zero entries, and wish to output the vector $x$ in full satisfying $\|x - x^*\| \leq \eps\|x^*\|$.
Because of the difference in setting, we need additional $\bigO{\nnz(A)}$ time to set up the data structures to get $\SQ(A)$, and we also need $\mathcal{O}^*(\frac{\|A\|_\fr^2\|A\|^2}{\sigma^4}n)$ time to query for all the entries of $x$, given the $\SQ(x)$ from the output of \cref{alg:SGDHHL}.
In total, this takes $\bigO{\nnz(A)} + \mathcal{O}^*\!\big(\frac{\|A\|_\fr^6\|A\|^2}{\sigma^8} + \frac{\|A\|_\fr^2\|A\|^2}{\sigma^4}n\big)$ time.
This is worse than Gupta and Sidford's time, even after accounting for the fact that, unlike in the setting we consider, we can apply the acceleration framework in~\cite{fgks15}.
Acceleration speeds up our algorithm and reduces our error dependence to poly-logarithmic in $\frac{1}{\eps}$, at the cost of $\nnz(A)$ time per ``phase'' throughout the logarithmically many phases.
The quantum algorithm performs comparably, since one needs to run it $\mathcal{O}^*(n)$ times so that state tomography can convert copies of $\ket{x}$ to an explicit list of $x$'s entries, resulting in a total running time of $\bigO{\nnz(A)} + \tilde{\mathcal{O}}^*\!\big(\frac{\|A\|_\fr}{\sigma}n\big)$.

However, our algorithm performs better than Gupta and Sidford's algorithm in a weaker setting.
QML algorithms often don't take the form specified above, where the goal is to output a full vector and we assume nothing about the input.
For example, the quantum recommendation system~\cite{kerenidis2016QRecSys} only outputs a measurement from the output vector, and works in a setting (providing online recommendations, in essence a dynamic data structure problem) where the $\nnz(A)$ cost for building a data structure can be amortized over many runs of their algorithm.
This is the setting we design our algorithm for, where we are given $\SQ(A)$ directly and only wish to know a measurement from the output vector via $\SQ(x)$, and we achieve a runtime independent of dimension via \cref{alg:SGDHHL}.
Gupta and Sidford's algorithm (even without acceleration) does not behave well in this setting, and would have a running time polynomial in input size, thus being exponentially slower than the quantum-inspired (and quantum) algorithm.
Our goal with this result is to advance the complexity theoretic understanding of quantum speedups by \emph{limiting the possibility of quantum speedup in the most general setting possible}.
To this end, we assume fairly little about our input, so our classical algorithms can work for many different applications where one might hope for quantum speedups.
This comes at the cost of increased runtime, including a polynomial dependence on $\eps^{-1}$ which, being exponentially larger than in other classical algorithms, potentially reduces the applicability of our result as an algorithm.
On the other hand, a similar drawback is also present in the related quantum algorithms, so the resulting runtimes can be at least more directly compared.


Finally, we note that sketching algorithms could also have been used to improve upon previous quantum-inspired results.
In particular, one can view the $\ell_2$-type samples that our input data structure supports as $\frac{\|A\|_\fr^2}{k\sigma^2}$-oversampled leverage score samples, enabling the application of typical sketching results~\cite{w14} associated with approximating matrix products and approximate subspace embeddings.
However, we were not quite able to find an algorithm with running time as good as \cref{main-nonsparse} through the sketching literature, though something similar might be possible (say, the same running time but with the alternate guarantee that $\|Ax - b\| \leq (1+\eps)\|Ax^* - b\|$).
Our guarantee that $\|x - x^*\| \leq \eps\|x^*\|$ is perhaps more natural in this quantum-inspired setting, since we need $x$ to be bounded away from zero to enable efficient (rejection) sampling.
Sketching may be more beneficial in the slightly less restrictive setting where one wishes to find \emph{any} dynamic data structure that can store $A, b$ and output measurements from $\ket{\approx A^+b}$.
In this setting, one wouldn't be required to use the QRAM data structure, and could build alternative data structures that could exploit, for example, oblivious sketches.\footnote{To our knowledge, prior versions of \cite{cchw20} produced algorithms in this setting, but the current version only uses the standard QRAM data structure.}

\paragraph{Comparisons to concurrent and subsequent work.}
Around the same time as this work, Chepurko, Clarkson, Horesh, and Woodruff described algorithms for linear regression and ``quantum-inspired'' recommendation systems, heavily using technical tools from the sketching literature~\cite{cchw20}.
Their algorithm gives a roughly $\frac{\|A\|_\fr^2}{\|A\|^2}$ improvement over our algorithm, ignoring minor details involving problem-specific parameters.\footnote{This comparison is between \cite[Theorem 18]{cchw20}, which, upon taking $\lambda = 0$ and $d' = 1$, gives a runtime of $\bigOt{\eps^{-4}\|A\|_\fr^4\|A\|^4\sigma^{-8}\log(d)}$ to return their output description, and our \cref{main-nonsparse}, which gives a runtime of $\bigOt{\eps^{-4}\|A\|_\fr^6\|A\|^2\sigma^{-8}}$ to return our description of $x$.}
This is done using an algorithm similar to that of \cite{gilyen2018QInsLowRankHHL}, but with an improved analysis and an application of conjugate gradient to solve a system of linear equations (instead of inverting the matrix directly).

Subsequent work by Shao and Montanaro~\cite{sm21} gives an algorithm improving on this result by a factor of $\frac{\|A\|^4}{\sigma^4\eps^2}$ when $\lambda = 0$ and $Ax^* = b$ exactly.
Their work uses the randomized Kaczmarz method~\cite{sv08}, which is a version of stochastic gradient descent~\cite{nsw13}.
If one changes our stochastic gradient $\nabla g$ to the randomized Kaczmarz update used in \cite{sm21}, replacing $A^\dagger b$ with the sketch $\frac{\|A\|_\fr^2}{\|A_{r,*}\|^2}A_{r,*}^\dagger b_r$, (i.e., replacing $\hat{b}$ in \Cref{eqn:v-updateSimplified} by $\frac{\|A\|_\fr^2}{\|A_{r,*}\|^2}b_r e_r$) then our analysis recovers a similar result for the special case when $\lambda = 0$ and $Ax^* = b$.\footnote{Specifically, the bottleneck of our algorithm is the large bound on the residual $\E[\|\nabla g(x^*)\|^2]$, since this is the piece of the error that does not decrease exponentially. If $\lambda=0$ and $Ax^* = b$, then making the described change leads to $Ax$ being replaced by $Ax - b$ in the variance bound of \cref{lem:acc-variance}, which is then zero for $x = x^*$. So, unlike in the original analysis, increasing $C$ beyond $\frac{\|A\|_\fr^2}{\|A\|^2}$ actually helps to reduce this residual. For example, increasing $C$ by a factor of $\frac{1}{\eps^2}$ reduces the number of iterations by a factor of $\frac{1}{\eps^2}$, decreasing the runtime (which is linear in $C$ and quadratic in iteration count) by a factor of $\frac{1}{\eps^2}$. We point the reader to \cite{nsw13} for further exploration of improving the residual in SGD and randomized Kaczmarz.}

\paragraph{Outlook.}
Our result and the aforementioned follow-up work show that simple iterative algorithms like stochastic gradient descent can be nicely combined with quantum-inspired linear algebra techniques, resulting in quantum-inspired algorithms that compare favorably to direct approaches based on finding approximate singular value decompositions~\cite{chia2019SampdSubLinLowRankFramework}.
However, one needs to be careful, as more involved iterative steps require more involved quantum-inspired subroutines that can rapidly blow up the overall complexity, as occurs in the quantum-inspired SDP solver of \cite{chia2019SampdSubLinLowRankFramework}, for example.
We leave it to future work to study which other iterative approaches can be similarly advantageously combined with quantum-inspired linear algebra techniques.
\section{Proofs}

For simplicity we assume knowledge of $\|A\|$, $\sigma$, and later, $\|x^*\|$, exactly, despite that our $\SQ(A)$ only gives us access to $\|A\|_\fr$.
Just an upper bound on $\|A\|$ and lower bounds on $\sigma$ and $\|x^*\|$ respectively suffice, giving a running time bound by replacing these quantities in our complexity bounds by any respective upper or lower bounds.
This holds because these quantities are only used to choose the internal parameters $s, \eta^{-1}, T,$ and $C$, and replacing these quantities by their respective bounds only increase these internal parameters, only improving the resulting error bounds of \Cref{alg:SGDHHL}.

The algorithm we use to solve regression is stochastic gradient descent (SGD).
Suppose we wish to minimize a convex function $f: \R^n \to \R$.
Consider the following recursion, starting from $x^{(0)} \in \R^n$, with a random function $\nabla g: \R^n \to \R$ and a deterministic sequence of positive scalars $(\eta_t )_{k \geq 1}$.
\begin{equation} \label{eqn:sgd}
    x^{(t)} = x^{(t-1)} - \eta_t  \nabla g(x^{(t-1)})
\end{equation}
The idea behind SGD is that if $\nabla g$ is a good stochastic approximation to $\nabla f$, then $x^{(t)}$ will converge to the minimizer of $f$.

In our case, as defined in the introduction, $f(x) = \frac12(\|Ax - b\|^2 + \lambda\|x\|^2)$, so we need a stochastic approximation to $\nabla f(x) = A^\dagger Ax - A^\dagger b + \lambda x$.
Our choice of stochastic gradient comes from observing that $A^\dagger A x = \sum_{r=1}^m\sum_{c=1}^n (A_{r,*})^\dagger A_{r,c} x_c$, so we can estimate this by sampling some of the summands.

\begin{restatable}{definition}{grad} \label{def-grad}
We define $\nabla g(x)$ to be the random function resulting from the following process.
Draw $r\in [m]$ from the distribution that is $r$ with probability $\frac{\|A_{r,*}\|^2}{\|A\|_\fr^2}$, and then draw $c_1,\ldots,c_C$ i.i.d.\ from the distribution that is $c$ with probability $\frac{|A_{r,c}|^2}{\|A_{r,*}\|^2}$.
For this choice of $r$ and $c_1,\ldots,c_C$, take
\begin{equation}
    \nabla g(x) = \frac{\|A\|_\fr^2}{\|A_{r,*}\|^2}\Big( \frac{1}{C} \sum_{j=1}^C\frac{\|A_{r,*}\|^2}{ |A_{r,c_j}|^2} A_{r,c_j}x_{c_j}\Big) (A_{r,*})^\dagger - A^\dagger b + \lambda x.
\end{equation}
\end{restatable}
\noindent Notice that the first term of this expression is the average of copies of the random vector
\begin{align*}
    \frac{\|A\|_\fr^2}{|A_{r,c}|^2} A_{r,c}x_{c}(A_{r,*})^\dagger \quad \text{ with probability } \frac{|A_{r,c}|^2}{\|A\|_\fr^2}.
\end{align*}
\Cref{alg:SGDHHL} is simply running the iteration \cref{eqn:sgd} with $x^{(0)} = \vec{0}$, $\nabla g$ as defined in \cref{def-grad}, and $\eta_t :=  \frac{\eps^2(\sigma^2+\lambda)}{32\|A\|_\fr^2\|A\|^2 + 16\lambda^2}$ (as we will see, this comes from applying \cref{thm:mainFixed} with $\eps'=\eps/2$).
In \cref{sec:sgd}, we prove that $x^{(T)}$ satisfies the desired error bound for $T := \Theta\big(\frac{\|A\|_\fr^2\|A\|^2+\lambda^2}{(\sigma^2+\lambda)^2\eps^2}\log\frac{1}{\eps}\big)$.
In \cref{sec:sq-out}, we prove that computing $x^{(T)}$ and simulating $\SQ(x^{(T)})$ can be done in the desired running time, assuming that $b$ is sparse.
Finally, in \cref{sec:sketch}, we show how to generalize our result to non-sparse $b$.

\subsection{Error analysis of stochastic gradient descent} \label{sec:sgd}

We now list the properties of the stochastic gradient from \cref{def-grad}: the proofs are straightforward computation and so are deferred to the appendix.
\begin{restatable}{lemma}{accvariance} \label{lem:acc-variance}
For fixed $x,y \in \C^n$ and the random function $\nabla g(\cdot)$ defined in \cref{def-grad}, the following properties hold ($\nabla g(x)$ and $\nabla g(y)$ use the same instance of the random function):
\begin{align*}
   {\bf Part~1.}&&\E [\nabla g(x)] &= \nabla f(x) = A^\dagger A x - A^\dagger b + \lambda x \\
   {\bf Part~2.}&& \Var [\nabla g(x)] &  = \frac{1}{C} \|A\|_\fr^4 \|x\|^2 + \Big(1-\frac{1}{C}\Big)\|A\|_\fr^2\|Ax\|^2 - \norm{A^\dagger Ax}^2 \\
   {\bf Part~3.}&& \E[\norm{\nabla g(x) - \nabla g(y)}_2^2] &= \norm{(A^\dagger A + \lambda I)(x-y)}^2 + \Var[ \nabla g(x-y) ]
\end{align*}
\end{restatable}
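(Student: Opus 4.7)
The plan is to reduce the entire lemma to a few computations about the single-sample estimator
\[
Y_{r,c}(x) := \frac{\|A\|_\fr^2}{|A_{r,c}|^2}A_{r,c} x_c (A_{r,*})^\dagger,
\]
which is produced with joint probability $\frac{\|A_{r,*}\|^2}{\|A\|_\fr^2}\cdot\frac{|A_{r,c}|^2}{\|A_{r,*}\|^2} = \frac{|A_{r,c}|^2}{\|A\|_\fr^2}$. In this notation $\nabla g(x) = \frac{1}{C}\sum_{j=1}^C Y_{r,c_j}(x) - A^\dagger b + \lambda x$, where the $c_j$'s share a common $r$ and are therefore conditionally i.i.d.\ given $r$ but \emph{not} unconditionally independent. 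Correctly tracking this conditional structure is the main subtlety.

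For Part~1, I would expand $\E[Y_{r,c}(x)] = \sum_{r,c}\frac{|A_{r,c}|^2}{\|A\|_\fr^2}\cdot\frac{\|A\|_\fr^2}{|A_{r,c}|^2}A_{r,c} x_c (A_{r,*})^\dagger$; the importance weights cancel, and using $\sum_r A_{r,c}(A_{r,*})^\dagger = (A^\dagger A)_{*,c}$ yields $A^\dagger A x$. Adding back the deterministic $-A^\dagger b + \lambda x$ recovers $\nabla f(x)$. For Part~2 the deterministic part drops out of the variance, so I apply the law of total variance to $\frac{1}{C}\sum_j Y_{r,c_j}(x)$ conditioned on $r$. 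Conditionally on $r$ the $Y_{r,c_j}$ are i.i.d., so the conditional variance of the average is $\frac{1}{C}$ times a single-sample conditional variance; a direct calculation gives the $r$-independent identity $\E[\|Y_{r,c}(x)\|^2\mid r] = \|A\|_\fr^4\|x\|^2$. The outer piece uses $\E[Y_{r,c}(x)\mid r] = \frac{\|A\|_\fr^2}{\|A_{r,*}\|^2}(Ax)_r (A_{r,*})^\dagger$, whose expected squared norm over $r$ evaluates to $\|A\|_\fr^2\|Ax\|^2$. Recentring by $\|\E[\nabla g(x)]\|^2 = \|A^\dagger A x\|^2$ and collecting the two pieces produces the stated formula.

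For Part~3 the key observation is that $Y_{r,c}(x)$ is linear in $x$. Coupling $\nabla g(x)$ and $\nabla g(y)$ with the same $r$ and $c_j$'s, the $-A^\dagger b$ term cancels exactly once in the difference, producing the pointwise identity $\nabla g(x) - \nabla g(y) = \nabla g(x-y) + A^\dagger b$. Applying the bias--variance decomposition then gives $\E[\|\nabla g(x) - \nabla g(y)\|^2] = \|\E[\nabla g(x-y)] + A^\dagger b\|^2 + \Var[\nabla g(x-y)]$, and Part~1 identifies $\E[\nabla g(x-y)] + A^\dagger b = (A^\dagger A + \lambda I)(x-y)$. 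The hardest part of the proof is therefore the bookkeeping in Part~2 around conditional-versus-unconditional independence of the $c_j$'s; if one mistakenly treated them as fully independent, the $(1-\frac{1}{C})\|A\|_\fr^2\|Ax\|^2$ cross-term would be miscounted.
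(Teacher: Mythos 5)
Your proposal is correct and follows essentially the same route as the paper's proof: both expand the single-sample estimator, both exploit the conditional i.i.d.\ structure of the $c_j$'s given $r$ to evaluate the variance, and both reduce Part~3 to the linearity observation $\nabla g(x)-\nabla g(y)=\nabla g(x-y)+A^\dagger b$ together with the bias--variance identity. The only (cosmetic) difference is that you organize Part~2 explicitly via the law of total variance in $r$, whereas the paper computes the conditional second moment $V$ first and subtracts $\|A^\dagger Ax\|^2$ at the end; these are algebraically identical decompositions and produce the same intermediate quantities $\E[\|Y_{r,c}\|^2\mid r]=\|A\|_\fr^4\|x\|^2$ and $\E_r\bigl[\|\E[Y_{r,c}\mid r]\|^2\bigr]=\|A\|_\fr^2\|Ax\|^2$.
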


We take $C = \|A\|_\fr^2/\|A\|^2$, which is the largest value we can set $C$ to before it stops having an effect on the variance of $\nabla g$.
In \cref{sec:sq-out}, we will see that it's good for SGD's running time to take $C$ to be as large as possible.
In this setting, and more generally if $C \leq \frac{\|A\|_\fr^2\|x\|^2}{\|Ax\|^2}$ the variance in Part 2 can be bounded as $\frac{2}{C}\|A\|_\fr^4\|x\|^2$.

Now, we show that performing SGD for $T$ iterations gives an $x$ sufficiently close to the optimal vector $x^*$.

\begin{proposition}\label{thm:mainFixed}
	Consider a matrix $A \in \C^{m\times n}$, a vector $b \in \C^n$, a regularization parameter $\lambda \geq 0$, and an error parameter $\eps \in (0,1]$.
	Denote $\sigma := \|A^+\|^{-1}$.
	Let $x^{(T)}$ be defined as \cref{eqn:sgd}, with $x^{(0)} = \vec{0}$, $\eta_t := \eta := \frac{\eps^2(\sigma^2+\lambda)}{8\|A\|_\fr^2\|A\|^2 + 4\lambda^2}$, and $\nabla g$ as defined in \cref{def-grad} with $C := \frac{\|A\|_\fr^2}{\|A\|^2}$.
	Then for $T := \frac{\ln(2/\eps^2)}{\eta  (\sigma^2 + \lambda)}=8\ln\left(\frac{\sqrt{2}}{\eps}\right)\frac{2\|A\|_\fr^2\|A\|^2 + \lambda^2}{\eps^2(\sigma^2+\lambda)^2}$,
	\begin{align*}
	\E[\|x^{(T)} - x^*\|^2] \leq \eps^2\|x^*\|^2.
	\end{align*}
	In particular, by Chebyshev's inequality, with probability $\geq 0.96$, $\|x^{(T)} - x^*\| \leq 5\eps\|x^*\|$.
\end{proposition}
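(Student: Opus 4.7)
The plan is a Bach--Moulines style one-step analysis of the SGD recursion, handling the ``variance depends on the current iterate'' issue by comparing $\nabla g(x^{(t)})$ against $\nabla g(x^*)$ rather than against zero. Expanding the square,
\[\|x^{(t+1)} - x^*\|^2 = \|x^{(t)} - x^*\|^2 - 2\eta\langle x^{(t)} - x^*,\, \nabla g(x^{(t)})\rangle + \eta^2\|\nabla g(x^{(t)})\|^2,\]
I would take conditional expectation given $x^{(t)}$. Part~1 of \cref{lem:acc-variance} combined with $\nabla f(x^*) = 0$ gives $\E[\nabla g(x^{(t)})\mid x^{(t)}] = \nabla f(x^{(t)}) = (A^\dagger A + \lambda I)(x^{(t)} - x^*)$. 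Because $x^{(0)} = 0$ and every update preserves membership in the image of $A^\dagger$ (as per footnote~\ref{foot:ImageRestriction}), each iterate stays in the rowspace of $A$, so $A^\dagger A + \lambda I$ acts on $x^{(t)} - x^*$ with spectrum bounded below by $\sigma^2 + \lambda$, yielding the strong-convexity bound $\langle x^{(t)} - x^*,\, \nabla f(x^{(t)})\rangle \geq (\sigma^2+\lambda)\|x^{(t)}-x^*\|^2$.

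For the second-moment term, Part~2 applied directly to $x^{(t)}$ would give a bound proportional to $\|x^{(t)}\|^2$, which cannot be absorbed by the contraction. Instead I would split $\nabla g(x^{(t)}) = (\nabla g(x^{(t)}) - \nabla g(x^*)) + \nabla g(x^*)$ and apply $\|a+b\|^2 \leq 2\|a\|^2 + 2\|b\|^2$. Part~3 of \cref{lem:acc-variance} bounds $\E[\|\nabla g(x^{(t)}) - \nabla g(x^*)\|^2]$ by $\|(A^\dagger A + \lambda I)(x^{(t)}-x^*)\|^2 + \Var[\nabla g(x^{(t)} - x^*)]$. The first piece is at most $(\|A\|^2 + \lambda)^2\|x^{(t)}-x^*\|^2 \leq (2\|A\|_\fr^2\|A\|^2 + 2\lambda^2)\|x^{(t)}-x^*\|^2$; for the second, Part~2 with $C=\|A\|_\fr^2/\|A\|^2$ gives $\Var[\nabla g(y)] \leq 2\|A\|_\fr^2\|A\|^2\|y\|^2$. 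Together these produce a uniform ``Lipschitz-type'' constant $L^2 := 8\|A\|_\fr^2\|A\|^2 + 4\lambda^2$ for which $\eta L^2 = \eps^2(\sigma^2+\lambda)$ by the stated choice of $\eta$. The same variance bound applied at $y = x^*$, combined with $\nabla f(x^*) = 0$, yields the residual $\E[\|\nabla g(x^*)\|^2] \leq 2\|A\|_\fr^2\|A\|^2\|x^*\|^2$.

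Combining these, the one-step recursion becomes
\[\E[\|x^{(t+1)} - x^*\|^2 \mid x^{(t)}] \leq \bigl(1 - \eta(\sigma^2+\lambda)(2-\eps^2)\bigr)\|x^{(t)}-x^*\|^2 + 4\eta^2\|A\|_\fr^2\|A\|^2\|x^*\|^2,\]
and since $\eps \leq 1$ the contraction factor is at most $1 - \eta(\sigma^2+\lambda)$. Taking total expectation, unrolling from $x^{(0)}=0$ and summing the geometric tail gives
\[\E[\|x^{(T)} - x^*\|^2] \leq (1 - \eta(\sigma^2+\lambda))^T \|x^*\|^2 + \frac{4\eta\|A\|_\fr^2\|A\|^2}{\sigma^2+\lambda}\|x^*\|^2.\]
With $T = \ln(2/\eps^2)/(\eta(\sigma^2+\lambda))$ the first term is at most $(\eps^2/2)\|x^*\|^2$, and plugging in the stated value of $\eta$ makes the second at most $(\eps^2/2)\|x^*\|^2$, giving the desired $\eps^2\|x^*\|^2$ bound. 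The high-probability conclusion is then Markov's inequality applied to the nonnegative random variable $\|x^{(T)}-x^*\|^2$.

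The main obstacle is conceptual rather than computational: the stochastic gradient's variance scales with $\|x^{(t)}\|$, not with $\|x^{(t)} - x^*\|$, so a naive argument leaves a residual that grows with the iterate and blocks contraction. Routing everything through Part~3 and the choice $C = \|A\|_\fr^2/\|A\|^2$ (which exactly balances the two terms in the Part~2 variance expression) is what allows the one-step recursion to close up and deliver a \emph{final-iterate} guarantee without any averaging scheme, as required for the subsequent $\SQ(x)$ construction in \cref{sec:sq-out}.
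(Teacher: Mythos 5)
Your proof is correct and follows essentially the same approach as the paper: the same decomposition $\nabla g(x) = (\nabla g(x) - \nabla g(x^*)) + \nabla g(x^*)$, the same use of Parts 2 and 3 of \cref{lem:acc-variance} with $C = \|A\|_\fr^2/\|A\|^2$, the same strong convexity bound, and the same choice of constants leading to contraction factor $1 - \eta(\sigma^2+\lambda)(2-\eps^2)$ and residual $(\eps^2/2)\|x^*\|^2$. The only cosmetic difference is that you unroll the recursion as a geometric sum while the paper verifies the bound $\delta_t \leq (\exp(-t\eta(\sigma^2+\lambda)) + \eps^2/2)\|x^*\|^2$ by induction; these are the same computation, and you are also right that the final step is really Markov's inequality, which the paper slightly mislabels as Chebyshev.
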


\begin{proof}
	Stochastic gradient descent is known to require a number of iterations linear in (something like the) second moment of the stochastic gradient.
	To analyze SGD, we loosely follow a strategy used by Bach and Moulines~\cite{mb11}.
	First, note that
	\begin{align*}
	&\E[\|\nabla g(x)\|^2]  \\
	&\leq \E[2\|\nabla g(x) - \nabla g(x^*)\|^2 + 2\|\nabla g(x^*)\|^2]
	\tag*{since $\|a + b\|^2 \leq 2\|a\|^2 + 2\|b\|^2$}\\
	&= 2\Big(\|(A^\dagger A + \lambda I)(x - x^*)\|^2 + \Var[\nabla g(x-x^*)] + \Var[\nabla g(x^*)] \Big)
	\tag*{since $\nabla f(x^*)  = \vec{0}$}\\
	&\leq 2\Big(\|(A^\dagger A + \lambda I)(x - x^*)\|^2 + 2\|A\|_\fr^2\|A\|^2(\|x-x^*\|^2 + \|x^*\|^2) \Big)
	\tag*{by \cref{lem:acc-variance}}\\
	&\leq 2(\|A^\dagger A + \lambda I\|^2 + 2\|A\|_\fr^2\|A\|^2)\|x - x^*\|^2 + 4\|A\|_\fr^2\|A\|^2\|x^*\|^2.
	\end{align*}
	Next, we use the well-known fact that $f(x)$ is $(\sigma^2 + \lambda)$-strongly convex,\footnote{Since we search for a sparse description of a solution vector we are effectively restricting the function to the image of $A^\dagger$, and on this subspace the function is indeed $(\sigma^2 + \lambda)$-strongly convex.} implying that $\langle \nabla f(x) - \nabla f(y), x - y\rangle \geq (\sigma^2 + \lambda)\|x-y\|^2$~\cite[Lemma~3.11]{bubeck2015ConvexOpt}, which can also be directly verified using the formula $\nabla f(x) = A^\dagger Ax - A^\dagger b + \lambda x$ for the gradient.
	\begin{align*}
	&\E[\|x^{(t)} - x^*\|^2 \mid x^{(t-1)}] \\
	&= \E[\|x^{(t-1)} - x^*\|^2 + 2\langle x^{(t-1)} - x^*, x^{(t)} - x^{(t-1)}\rangle + \|x^{(t)} - x^{(t-1)}\|^2 \mid x^{(t-1)}] \\
	&= \E[\|x^{(t-1)} - x^*\|^2 - 2\eta_t\langle x^{(t-1)} - x^*, \nabla g(x^{(t-1)})\rangle + \eta_t^2\|\nabla g(x^{(t-1)})\|^2 \mid x^{(t-1)}] \\
	&= \|x^{(t-1)} - x^*\|^2 - 2\eta_t\langle x^{(t-1)} - x^*, \nabla f(x^{(t-1)})\rangle + \eta_t^2\E[\|\nabla g(x^{(t-1)})\|^2 \mid x^{(t-1)}] \\
	&\leq (1 - 2\eta_t(\sigma^2 + \lambda) + 2\eta_t^2(\|A^\dagger A + \lambda I\|^2 + 2\|A\|_\fr^2\|A\|^2))\|x^{(t-1)} - x^*\|^2 + 4\eta_t^2\|A\|_\fr^2\|A\|^2\|x^*\|^2.
	\end{align*}
	We conditioned on $x^{(t-1)}$ in the above computation, so the randomness in $\nabla g$ came only from one iteration.
	Let $\delta_t:=\mathbb{E}[\nrm{x^{(t)}-x^*}^2]$.
    Then by taking expectation over $x^{(t-1)}$ for both sides of the above computation, we get
	\begin{align*}
	\delta_t &\leq (1 - 2\eta_t(\sigma^2 + \lambda) + 2\eta_t^2(\|A^\dagger A + \lambda I\|^2 + 2\|A\|_\fr^2\|A\|^2))\delta_{t-1} + 4\eta_t^2\|A\|_\fr^2\|A\|^2\|x^*\|^2 \\
	&\leq (1 - 2\eta_t(\sigma^2 + \lambda) + 2\eta_t^2(4\|A\|_\fr^2\|A\|^2 + 2\lambda^2))\delta_{t-1} + 4\eta_t^2\|A\|_\fr^2\|A\|^2\|x^*\|^2,
	\end{align*}
    where the last step follows from the inequality $\|a + b\|^2 \leq 2\|a\|^2 + 2\|b\|^2$.
	Now, by substituting $\eta_t^2 = \eta^2 = \eta\frac{\eps^2(\sigma^2+\lambda)}{8\|A\|_\fr^2\|A\|^2+4\lambda^2}$ and using that $\eps\leq 1$ we get 
	\begin{align*}
		\delta_t
		&\leq (1 - \eta(\sigma^2 + \lambda))\delta_{t-1} + \frac{\eps^2}{2} \eta (\sigma^2 + \lambda)\|x^*\|^2.
	\end{align*}	
	This is a purely deterministic recursion on $\delta_t$, which we can bound by
	\begin{align*}
	\delta_t
	&\leq \left(\exp(-t\eta (\sigma^2 + \lambda))+\frac{\eps^2}{2}\right)\|x^*\|^2.
	\end{align*}
	Since $x^{(0)} = \vec{0}$ we have $\delta_0 = \|x^*\|^2$ so the bound holds for $t=0$, and by induction we have
	\begin{align}
	\delta_t
	&\leq (1 - \eta(\sigma^2 + \lambda))\left(\exp(-(t-1)\eta (\sigma^2 + \lambda))+\frac{\eps^2}{2}\right)\|x^*\|^2 + \frac{\eps^2}{2} \eta (\sigma^2 + \lambda)\|x^*\|^2 \nonumber\\
	&=\left((1 - \eta(\sigma^2 + \lambda))\exp(-(t-1)\eta (\sigma^2 + \lambda))+\frac{\eps^2}{2}\right)\|x^*\|^2 \nonumber\\
	&\leq\left(\exp(-t\eta (\sigma^2 + \lambda))+\frac{\eps^2}{2}\right)\|x^*\|^2, \label{eqn:internal-iterateFixed}
	\end{align}
	which is at most $\eps^2\|x^*\|^2$ for $t\geq T = \frac{\ln(2/\eps^2)}{\eta(\sigma^2+\lambda)}$.
\end{proof}
\subsection{Time complexity analysis of SGD for sparse \texorpdfstring{$b$}{b}} \label{sec:sq-out}

Now, we show that, assuming $\SQ(A)$, it's possible to perform the gradient steps when $b$ is sparse (that is, the number of non-zero entries of $b$, $\|b\|_0$, is small).
Recall from \cref{def:sparse-description} that we say we have an $s$-sparse description of $x \in \C^d$ if we have an $s$-sparse $v \in \C^n$ such that $x = A^\dagger v$.
\begin{lemma}\label{lem:one_step}
    Given $\SQ( A )$, we can output $x^{(t)}$ as a $(t + \|b\|_0)$-sparse description in time
    \begin{align*}
     \bigO{ C t ( t + \|b\|_0 ) }.
    \end{align*}
\end{lemma}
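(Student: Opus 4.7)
The plan is to analyze the update rule~\eqref{eqn:v-updateSimplified} directly: first I bound $|\mathrm{supp}(v^{(t)})|$ by induction on $t$, then I bound the cost of a single iteration, storing $v^{(t)}$ in a sparse data structure (e.g.\ a hash map keyed by its nonzero coordinates) so that all arithmetic only touches nonzero entries.

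For the sparsity claim, observe that the right-hand side of~\eqref{eqn:v-updateSimplified} consists of three contributions: rescaling $(1-\eta\lambda)v^{(t)}$ preserves $\mathrm{supp}(v^{(t)})$; the term $\eta\hat{b}$ contributes a fixed set of indices contained in $\mathrm{supp}(b)$, of size at most $\|b\|_0$; and the scalar multiple of $e_r$ introduces at most one new coordinate per iteration. Starting from $v^{(0)} = \vec 0$, a direct induction gives $|\mathrm{supp}(v^{(t)})| \leq \|b\|_0 + t$, which is the claimed $(t+\|b\|_0)$-sparse description.

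For the per-iteration cost, drawing $r$ and $c_1,\dots,c_C$ via $\SQ(A)$ is $\bigO{C}$. The dominant work is evaluating the $C$ inner products $(A_{*,c_j})^\dagger v^{(t)}$: each is a sum over the (at most $t+\|b\|_0$) nonzero coordinates of $v^{(t)}$, and each summand requires one $\bigO{1}$-time entry query to $\SQ(A)$. Hence each inner product costs $\bigO{t+\|b\|_0}$, giving $\bigO{C(t+\|b\|_0)}$ in total. The remaining bookkeeping---rescaling $v^{(t)}$ by $(1-\eta\lambda)$, adding $\eta\hat{b}$, and updating the $r$-th coordinate---also fits within $\bigO{t+\|b\|_0}$ per iteration, with the rescaling handled either by touching each nonzero entry explicitly or, if one prefers, by maintaining a single cumulative scalar. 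Summing the per-iteration cost $\bigO{C(t'+\|b\|_0)}$ over $t' = 0,\dots,t-1$ yields the total $\bigO{Ct(t+\|b\|_0)}$, as claimed. The argument is routine; the only point demanding care is to represent $v^{(t)}$ sparsely, so that the column-inner-product step runs in time $\bigO{|\mathrm{supp}(v^{(t)})|}$ rather than $\bigO{n}$.
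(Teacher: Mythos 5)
Your proof is correct and follows essentially the same route as the paper: bound $|\mathrm{supp}(v^{(t)})|$ by a one-new-index-per-iteration induction, then charge $\bigO{C(t'+\|b\|_0)}$ per iteration for drawing the $C$ samples and evaluating the $C$ inner products against the sparse $v^{(t')}$, and sum. The only cosmetic difference is that the paper frames the sparsity induction by pretending $v^{(0)}$ already has $b$'s support, whereas you track the contribution of $\hat{b}$ separately — both yield the same $(t+\|b\|_0)$ bound.
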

\begin{proof}
First, suppose we are given $x^{(t)}$ as an $s$-sparse description, and wish to output a sparse description for the next iterate $x^{(t+1)}$, which from \cref{def-grad}, satisfies
\begin{align*}
    x^{(t+1)} &= x^{(t)} - \eta_{t+1} \nabla g(x^{(t)}) \\
    &= x^{(t)} - \eta_{t+1} \cdot \Big(\frac{\|A\|_\fr^2}{\|A_{r,*}\|^2}\Big(\frac1C\sum_{j=1}^C\frac{\|A_{r,*}\|^2}{|A_{r,c_j}|^2} A_{r,c_j}x_{c_j}^{(t)}\Big) (A_{r,*})^\dagger - A^\dagger b + \lambda x^{(t)}\Big).
\end{align*}
The $r$ and $c_j$'s are drawn from distributions, and $\SQ(A)$ can produce such samples with its \textsc{Sample} queries, taking $\bigO{C}$ time.

From inspection of the above equation, if we have $x^{(t)}$ in terms of its description as $A^\dagger v^{(t)}$, then we can write $x^{(t+1)}$ as a description $A^\dagger v^{(t+1)}$ where $v^{(t+1)}$ satisfies (as in \Cref{eqn:v-updateSimplified})
\begin{align} \label{eqn:v-update}
    v^{(t+1)} = v^{(t)} - \eta_{t+1} \cdot \Big(\frac{\|A\|_\fr^2}{\|A_{r,*}\|^2}\Big(\frac1C\sum_{j=1}^C\frac{\|A_{r,*}\|^2}{|A_{r,c_j}|^2} A_{r,c_j}(A_{*,c_j})^\dagger v^{(t)}\Big)e_r - b + \lambda v^{(t)}\Big).
\end{align}

Here, $e_r$ is the vector that is one in the $r$th entry and zero otherwise.
So, if $v^{(t)}$ is $s$-sparse, and has a support that includes the support of $b$, then $v^{(t+1)}$ is $(s+1)$-sparse.
Furthermore, by exploiting the sparsity of $v^{(t)}$, computing $v^{(t+1)}$ takes $\bigO{Cs}$ time (including the time taken to use $\SQ(A)$ to query $A$ for all of the relevant norms and entries).

So, if we wish to compute $x^{(t)}$, we begin with $x^{(0)}$, which we have trivially as an $\|b\|_0$-sparse description ($v^{(0)} = \vec{0}$).
It is sparser, but if we consider $x^{(0)}$ as having the same support as $b$, by the argument described above, we can then compute $x^{(1)}$ as an $(\|b\|_0+1)$-sparse description in $\bigO{C\|b\|_0}$ time.

By iteratively computing $v^{(i+1)}$ from $v^{(i)}$ in $\bigO{C(\|b\|_0 + i)}$ time, we can output $x^{(t)}$ as a $(t + \|b\|_0)$-sparse description in
\begin{align*}
    \mathcal{O}\Big(\sum_{i=0}^{t} C(\|b\|_0 + i)\Big) = \bigO{Ct (t+\|b\|_0)}
\end{align*}
time as desired.
\end{proof}

If $\lambda = O(\|A\|_\fr\|A\|)$ by \cref{thm:mainFixed}, to get a good enough iterate $x^{(T)}$ we can take $T = \bigO{\frac{\|A\|^2\|A\|_\fr^2}{\left(\sigma^4+\lambda^2\right)\eps^2}\log\frac{1}{\eps}}$ and $C = \frac{\|A\|_\fr^2}{\|A\|^2}$, giving a running time of
\begin{align*}
    \bigO{\|b\|_0\frac{\|A\|_\fr^4}{\left(\sigma^4+\lambda^2\right)\eps^2}\log\frac{1}{\eps} + \frac{\|A\|^2\|A\|_\fr^6}{\left(\sigma^8+\lambda^4\right)}\log^2\frac{1}{\eps}}.
\end{align*}
Notice that it's good to scale $C$ up to be as large as possible, since it means a corresponding linear decrease in the number of iterations, which our algorithm's running time depends on quadratically.

After performing SGD, we have an $x^{(T)}$ that is close to $x^*$ as desired, and it is given as a sparse description $A^\dagger v^{(T)}$.
We want to say that we have $\SQ(x^{(T)})$ from its description.
For this we invoke a result from \cite{tang2018QuantumInspiredRecommSys}, describing how to length-square sample from a vector that is a linear combination of length-square accessible vectors---that is, getting $\SQ(c_1v_1 + \cdots + c_dv_d)$ given $\SQ(v_1),\ldots,\SQ(v_d)$.

\begin{lemma}[{\cite[Lemmas 2.9 and 2.10]{chia2019SampdSubLinLowRankFramework}}] \label{lem:matvec}
Suppose we have $\SQ(M^\dagger)$ for $M \in \C^{n \times d}$ and $\Q(x)\in\C^d$.
Denote $y := Mx$ and $\Delta := \sum_{i=1}^d \|M_{*,i}\|^2x_i^2 / \| y \|_2^2$.
Then we can implement $\SQ(y) \in \C^n$ with $\mathcal{T}(y) = \bigO{d^2\Delta\log( 1/\delta) \cdot \mathcal{T}(M)}$, where queries succeed with probability $\geq 1-\delta$.
Namely, we can:
\begin{enumerate}[label=(\alph*)]
    \item\label{it:query} query for entries with complexity $\bigO{d\cdot\mathcal{T}(M)}$;
    \item\label{it:sample} sample from $y$ with running time $\tsample(y)$ satisfying
    \begin{gather*}
    \E[\tsample(y)] = \bigO{d^2\Delta\cdot\mathcal{T}(M)} \\
    \text{and }\Pr[ \tsample(y) = \bigO{d^2\Delta \cdot \log ( 1 / \delta )} ] \geq 1 - \delta.
    \end{gather*}
    \item\label{it:norm} estimate $\nrm{y}$ to $(1\pm \eps)$ multiplicative error with success probability at least $1-\delta$ in complexity 
    \begin{align*}
    \bigO{\frac{d^2\Delta}{\eps^2}\mathcal{T}(M) \cdot \log(1/\delta)}.
    \end{align*}
\end{enumerate}
\end{lemma}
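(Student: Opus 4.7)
My plan is to establish the three parts (a), (b), (c) via the standard quantum-inspired rejection-sampling technique, leveraging the row-norm data $\SQ(M^\dagger)$ (which gives $\SQ(M_{*,i})$ for every column $i$, plus $\textsc{Sample1}$ proportional to $\|M_{*,i}\|^2$).

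For part \ref{it:query}, the formula $y_j = \sum_{i=1}^d M_{j,i} x_i$ can be evaluated directly: query $\Q(x)$ for each $x_i$ and $\SQ(M^\dagger)$ for each $M_{j,i}$, then sum. This costs $O(d\cdot \mathcal{T}(M))$.

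For parts \ref{it:sample} and \ref{it:norm}, the core idea is rejection sampling from a proposal distribution that is easy to sample. Define the proposal $q(j) := \sum_{i=1}^d |M_{j,i}|^2 |x_i|^2 / S$ with normalization $S := \sum_{k=1}^d \|M_{*,k}\|^2 |x_k|^2$. We can sample from $q$ in $O(\mathcal{T}(M))$ time as a two-step process: first pick $i \in [d]$ with probability $\|M_{*,i}\|^2|x_i|^2/S$ (a discrete distribution of support size $d$ we can build in $O(d\cdot\mathcal{T}(M))$ preprocessing), then use $\textsc{Sample2}(i)$ from $\SQ(M^\dagger)$ to draw $j$ with probability $|M_{j,i}|^2/\|M_{*,i}\|^2$. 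The composite distribution is exactly $q$. The Cauchy--Schwarz bound
\[
|y_j|^2 \;=\; \Bigl|\sum_{i=1}^d M_{j,i}x_i\Bigr|^2 \;\leq\; d\sum_{i=1}^d |M_{j,i}|^2|x_i|^2
\]
shows that the target density $p(j):=|y_j|^2/\|y\|^2$ satisfies $p(j)/q(j) \leq dS/\|y\|^2 = d\Delta$, so rejection sampling with acceptance probability $|y_j|^2/(d\sum_i|M_{j,i}|^2|x_i|^2)$ succeeds per trial with probability exactly $\|y\|^2/(dS)=1/(d\Delta)$. Each trial draws one proposal and evaluates $y_j$ using part \ref{it:query}, so it costs $O(d\cdot \mathcal{T}(M))$. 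The geometric number of trials until acceptance has expectation $d\Delta$ and, by a standard tail bound, is $O(d\Delta\log(1/\delta))$ with probability $\geq 1-\delta$; multiplying by the per-trial cost gives the bounds in part \ref{it:sample}.

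For part \ref{it:norm}, observe that the acceptance probability $1/(d\Delta) = \|y\|^2/(dS)$ directly encodes $\|y\|^2$ once $S$ is known (and $S$ can be computed exactly in $O(d\cdot\mathcal{T}(M))$ time). Drawing $N$ independent proposals and taking the empirical acceptance rate $\hat p$, a multiplicative Chernoff bound shows that $N = O(d\Delta \eps^{-2}\log(1/\delta))$ samples suffice to guarantee $\hat p \in (1\pm\eps)/(d\Delta)$ with probability $\geq 1-\delta$; returning $\sqrt{dS\cdot \hat p}$ then gives an $(1\pm\eps)$-multiplicative estimate of $\|y\|$, at total cost $O(d^2\Delta\eps^{-2}\mathcal{T}(M)\log(1/\delta))$.

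The only non-routine step is the Cauchy--Schwarz bound that certifies $p/q \leq d\Delta$; this is what forces the $d^2\Delta$ scaling and is the reason the lemma is tight only when the stated $\Delta$ is small. Everything else is bookkeeping on geometric-variable tails and Chernoff concentration.
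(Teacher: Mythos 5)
Your proof is correct and matches the standard rejection-sampling argument used in the cited reference (\cite{chia2019SampdSubLinLowRankFramework}, Lemmas~2.9--2.10): sample from the proposal $q(j)\propto\sum_i|M_{j,i}|^2|x_i|^2$, use Cauchy--Schwarz to certify the rejection bound $p/q\le d\Delta$, and estimate the acceptance rate (equivalently, $\|y\|^2/(dS)$) for the norm. The present paper does not reprove the lemma, but your account is exactly the argument behind the citation, so there is nothing to add.
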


So we care about the quantity $\Delta = \frac{1}{\|A^\dagger v^{(t)}\|^2}\sum_{i=1}^m \|A_{i,*}\|^2\abs{v^{(t)}_i}^2$, with $t = T$, where $v^{(t)}$ follows the recurrence according to \cref{eqn:v-update} (recalling from before that $r$ and $c_1,\ldots,c_C$ are sampled randomly and independently each iteration):
\begin{align*}
    v^{(t+1)} = v^{(t)} - \eta_{t+1} \cdot \Big(\underbrace{\frac{\|A\|_\fr^2}{\|A_{r,*}\|^2}\Big(\frac1C\sum_{j=1}^C\frac{\|A_{r,*}\|^2}{|A_{r,c_j}|^2} A_{r,c_j}(A_{*,c_j})^\dagger v^{(t)}\Big)e_r - b + \lambda v^{(t)}}_{\nabla \tilde{g}(v^{(t)})}\Big).
\end{align*}
Roughly speaking, $\Delta$ encodes the amount of cancellation that could occur in the product $A^\dagger v^{(t)}$.
We will consider it as a norm: $\Delta = \frac{\|v^{(t)}\|_D^2}{\|x^{(t)}\|^2}$, for $D$ a diagonal matrix with $D_{ii} = \|A_{i,*}\|$, where $\|v\|_D := \sqrt{v^\dagger D^\dagger D v}$.
The rest of this section will be devoted to bounding $\Delta$ where $x^{(t)}$ comes from SGD as in \cref{thm:mainFixed}.

First, notice that we can show similar moment bounds for $\nabla \tilde{g}(v^{(t)})$ in the $D$ norm as those for $\nabla g(x^{(t)})$ in \cref{lem:acc-variance}; we defer the proof of this to the appendix.
\begin{restatable}{lemma}{variancevariant}
\label{lem:variance-variant}
Let $\nabla \tilde{g}(v)$ denote
\begin{align*}
    \frac{\|A\|_\fr^2}{\|A_{r,*}\|^2}\Big(\frac1C\sum_{j=1}^C\frac{\|A_{r,*}\|^2}{|A_{r,c_j}|^2} A_{r,c_j}(A_{*,c_j})^\dagger v\Big)e_r - b + \lambda v,
\end{align*}
where $r, c_1,\ldots,c_C$ are sampled according to the same distribution as $\nabla g$ is (\cref{def-grad}).
For a fixed $v \in \C^m$ and $C = \frac{\|A\|_\fr^2}{\|A\|^2}$, we have:
\begin{align*}
{\bf Part~1.} &&\E[\nabla \tilde{g}(v)] &= AA^\dagger v - b + \lambda v \\
{\bf Part~2.} &&\E[\|\nabla \tilde{g}(v) - \E[\nabla \tilde{g}(v)]\|_D^2] &\leq 2\|A\|_\fr^2\|A\|^2\|A^\dagger v\|^2.
\end{align*}
\end{restatable}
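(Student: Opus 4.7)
The plan is to mirror the proof of Lemma~\ref{lem:acc-variance} in the $D$-weighted norm, relying on the key structural observation that the random part of $\nabla\tilde g(v)$,
\begin{equation*}
X := \frac{\|A\|_\fr^2}{\|A_{r,*}\|^2}\Big(\frac{1}{C}\sum_{j=1}^C\frac{\|A_{r,*}\|^2}{|A_{r,c_j}|^2} A_{r,c_j}(A_{*,c_j})^\dagger v\Big)e_r,
\end{equation*}
is supported on the single coordinate $r$, so $\|X\|_D^2 = \|A_{r,*}\|^2|X_r|^2$; this extra factor of $\|A_{r,*}\|^2$ cancels cleanly against the row-sampling probability $\|A_{r,*}\|^2/\|A\|_\fr^2$, which is precisely the reason the variance bound has the promised form.

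For Part~1, the deterministic terms $-b+\lambda v$ pass through expectation, so only $\E[X]$ needs attention. Conditioning on $r$, a single column $c$ drawn with probability $|A_{r,c}|^2/\|A_{r,*}\|^2$ satisfies $\E_c\bigl[\frac{\|A_{r,*}\|^2}{|A_{r,c}|^2} A_{r,c}(A_{*,c})^\dagger v\bigr] = \sum_c A_{r,c}(A_{*,c})^\dagger v = (AA^\dagger v)_r$, so by linearity the average over $c_1,\ldots,c_C$ has the same conditional expectation. Taking the outer expectation over $r$ cancels $\|A\|_\fr^2/\|A_{r,*}\|^2$ against the row-sampling weight and sums to $AA^\dagger v$, giving the claimed $AA^\dagger v - b + \lambda v$.

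For Part~2, the deterministic terms drop out of the centered quantity, so $\E[\|\nabla\tilde g(v)-\E\nabla\tilde g(v)\|_D^2] = \E[\|X-\E X\|_D^2] \leq \E[\|X\|_D^2]$ by the standard inner-product identity. I would compute $\E[\|X\|_D^2]$ by iterated expectation: conditioning on $r$ and writing $\bar Y := \frac{1}{C}\sum_j Y_j$ with $Y_j := \frac{\|A_{r,*}\|^2}{|A_{r,c_j}|^2}A_{r,c_j}(A_{*,c_j})^\dagger v$, we have $\E[|\bar Y|^2\mid r] = |\E[Y\mid r]|^2 + \Var[Y\mid r]/C$, where $\E[Y\mid r] = (AA^\dagger v)_r$ and a short computation gives $\E[|Y|^2\mid r] = \|A_{r,*}\|^2\|A^\dagger v\|^2$. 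After the $\|A_{r,*}\|^2$ cancellations, summing over $r$ yields
\begin{equation*}
\E[\|X\|_D^2] = \|A\|_\fr^2\|AA^\dagger v\|^2\Big(1-\frac{1}{C}\Big) + \frac{\|A\|_\fr^4}{C}\|A^\dagger v\|^2.
\end{equation*}

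Finally I would apply the operator-norm bound $\|AA^\dagger v\|^2 \leq \|A\|^2\|A^\dagger v\|^2$ and substitute $C = \|A\|_\fr^2/\|A\|^2$, so that each of the two terms is bounded by $\|A\|_\fr^2\|A\|^2\|A^\dagger v\|^2$, summing to the claimed $2\|A\|_\fr^2\|A\|^2\|A^\dagger v\|^2$. I do not anticipate any real obstacle beyond careful bookkeeping of the importance-sampling normalizations; the conceptual point is that weighting by $D$ exactly absorbs the $\|A_{r,*}\|^{-2}$ blow-up from importance sampling, so the final bound takes the same shape as Lemma~\ref{lem:acc-variance} but with $\|x\|^2$ replaced by $\|A^\dagger v\|^2$.
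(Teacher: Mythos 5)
Your proposal is correct and takes essentially the same approach as the paper: the key observation that the random vector $X$ is supported on the single coordinate $r$, so the extra factor of $\|A_{r,*}\|^2$ from the $D$-norm cancels against the row-sampling weight, is exactly what the paper exploits to reuse the quantity $V$ computed in the proof of \cref{lem:acc-variance} with $x = A^\dagger v$. The only cosmetic difference is that you drop the negative term $\|\E X\|_D^2 = \|AA^\dagger v\|_D^2$ before the final bound rather than after, and you recompute $V$ from scratch rather than citing the earlier calculation, but the substance is identical.
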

\noindent
To bound $\Delta$, we will show a recurrence via \cref{lem:variance-variant}.
Note that 
\begin{align*}
    &\E[\|v^{(t+1)}\|_D\mid v^{(t)}] \\
    &\leq \|v^{(t)}\|_D(1-\eta_{t+1}\lambda) + \eta_{t+1} \E[\|\nabla \tilde{g}(v^{(t)})-\lambda v^{(t)}\|_D \mid v^{(t)}]
    \tag*{by triangle inequality} \\
    &\leq \|v^{(t)}\|_D + \eta_{t+1} \sqrt{\E[\|\nabla \tilde{g}(v^{(t)})-\lambda v^{(t)}\|_D^2 \mid v^{(t)}]}
    \tag*{by $\E[Z]^2 \leq \E[Z^2]$} \\
    &\leq \|v^{(t)}\|_D + \eta_{t+1} \sqrt{\|AA^\dagger v^{(t)} - b\|_D^2 + 2\|A\|_\fr^2\|A\|^2\|A^\dagger v^{(t)}\|^2}
    \tag*{by $\E[Z^2] = \E[Z]^2 + \Var[Z]$} \\
    &\leq \|v^{(t)}\|_D + \eta_{t+1}(\|AA^\dagger v^{(t)} - b\|_D + \sqrt{2}\|A\|_\fr\|A\|\|A^\dagger v^{(t)}\|)
    \tag*{by $\sqrt{a+b} \leq \sqrt{a} + \sqrt{b}$}\\
    &\leq \|v^{(t)}\|_D + \eta_{t+1}(\|b\|_D + (1+\sqrt{2})\|A\|_\fr\|A\|\|x^{(t)}\|). \tag*{by $x^{(t)} = A^\dagger v^{(t)}$ and $\|u\|_D \leq \|A\|\|u\|$}
\end{align*}
Taking expectation over $v^{(t)}$ we get
\begin{align*}
    \E[\|v^{(t+1)}\|_D] &\leq \E[\|v^{(t)}\|_D] + \eta_{t+1}(\|b\|_D + 3\|A\|_\fr\|A\|\E[\|x^{(t)}\|]),
\end{align*}
and since $v^{(0)} = \vec{0}$ we can trivially solve this recurrence resulting in
\begin{align} \label{eqn:v-recurrence-soln}
    \E[\|v^{(t)}\|_D] \leq \sum_{t'=1}^t \eta_{t'}(\|b\|_D + 3\|A\|_\fr\|A\|\E[\|x^{(t'-1)}\|]).
\end{align}
Further, using the parameter choices of \Cref{thm:mainFixed} due to \cref{eqn:internal-iterateFixed} we have
\begin{align*}
    \E[\|x^{(t')}\|] \leq \|x^*\| + \sqrt{\E[\|x^{(t')} - x^*\|^2]}
    = \|x^*\| + \sqrt{\delta_{t'}}
    \leq 3\|x^*\|,
\end{align*}
so
\begin{align} \label{eqn:cancel-final}
    \E[\|v^{(t)}\|_D] 
    &\leq t\eta (\|b\|_D + 9\|A\|_\fr\|A\|\|x^*\|).
\end{align}

Finally, we can bound our cancellation constant.
Using Markov's inequality for \cref{eqn:cancel-final} and combining it with \cref{thm:mainFixed} by a union bound we get that with probability $\geq 0.9$, both $\|x^{(T)} - x^*\| \leq 5\eps\|x^*\|$ and
\begin{align*}
    \|v^{(T)}\|_D \leq 20T\eta(\|b\|_D + 9\|A\|_\fr\|A\|\|x^*\|)
    = 20\frac{\ln(2/\eps^2)}{(\sigma^2 + \lambda)}(\|b\|_D + 9\|A\|_\fr\|A\|\|x^*\|)
\end{align*}
When both those bounds hold and $\eps \leq 1/10$, we have
\begin{multline}\label{eqn:del-boundFixed}
    \Delta = ~ \frac{\|v^{(T)}\|_D^2}{\|x^{(T)}\|^2}
    \leq \frac{\|v^{(T)}\|_D^2}{\|x^*\|^2(1-5\eps)^2}  
    \lesssim ~ \frac{\log^2(1/\eps)}{(\sigma^2 + \lambda)^2}\left(\frac{\|b\|_D^2}{\|x^*\|^2} + \|A\|^2_\fr\|A\|^2\right) \\
    \lesssim ~ \frac{\log^2(1/\eps)}{(\sigma^2 + \lambda)^2}\left(\frac{\|A\|^2\|b\|^2}{\|x^*\|^2} + \|A\|^2_\fr\|A\|^2\right). 
\end{multline}
The last inequality follows from using that $\|b\|_D^2 = \|Db\|^2 \leq \|D\|^2\|b\|^2 \leq \|A\|^2\|b\|^2$.

\subsection{Extending to non-sparse \texorpdfstring{$b$: Proof of \cref{main-nonsparse}}{b: Proof of Theorem 1.4}} \label{sec:sketch}
In previous sections, we have shown how to solve our regularized regression problem for sparse $b$: from \cref{thm:mainFixed}, performing SGD for $T \eqsim \frac{\|A\|_\fr^2\|A\|^2+\lambda^2}{(\sigma^2+\lambda)^2\eps^2}\log\frac{1}{\eps}$ iterations outputs an $x$ with the desired error bound; from \cref{lem:one_step}, it takes $\bigO{\frac{\|A\|_\fr^2}{\|A\|^2}T(T+\|b\|_0)}$ time to output $x$ as a sparse description; and from \cref{sec:sq-out}, we have sample and query access to that output $x$ given its sparse description. Let $\lambda = \bigO{\nrm{A}_F\nrm{A}}$ in the rest of the section.

Now, all that remains is to extend this work to the case that $b$ is non-sparse.
In this case, we will simply replace $b$ with a sparse $\hat{b}$ that behaves similarly, and show that running SGD with this value of $\hat{b}$ gives all the same results.
The sparsity of $\hat{b}$ will be $\bigO{\frac{\|A\|_\fr^2\|b\|^2}{(\sigma^2 + \lambda)^2\eps^2\|x^*\|^2}}$, giving a total running time of
\begin{align*}
    \bigO{\frac{\|A\|_\fr^6\|A\|^2}{(\sigma^2+\lambda)^4\eps^4}\Big( \frac{\|b\|^2}{\|A\|^2\|x^*\|^2} + \log\frac{1}{\eps} \Big) \log\frac{1}{\eps}}.
\end{align*}
We show below that the bound in \cref{eqn:del-boundFixed} also holds in this case.
Using \cref{lem:matvec}, the time it takes to respond to a query to $\SQ(x^{(T)})$ with probability $0.99$ is $(T+\|\hat{b}\|_0)^2\Delta$, which gives the running time in \cref{main-nonsparse},
\begin{align*}
    \bigO{(T + \|\hat{b}\|_0)^2\Delta}
    = \bigO{\frac{\|A\|_\fr^6\|A\|^6}{(\sigma^2+\lambda)^6\eps^4}\log^2\frac{1}{\eps}\Big(\frac{\|b\|^4}{\|A\|^4\|x^*\|^4} + \log^2\frac{1}{\eps}\Big)\Big(\frac{\|b\|^2}{\|A\|_\fr^2\|x^*\|^2} + 1\Big)}.
\end{align*}
The crucial observation for sparsifying $b$ is that we can use importance sampling to approximate the matrix product $A^\dagger b$, which suffices to approximate the solution $x^*$.
\begin{lemma}[Matrix multiplication to Frobenius norm error, {\cite[Lemma~4]{dkm06}}] \label{mm-approx}
    Consider $X \in \C^{m\times n}, Y \in \C^{m\times p}$, and let $S \in \R^{s\times m}$ be an \emph{importance sampling matrix} for $X$.
    That is, let each $S_{i,*}$ be independently sampled to be $\frac{\|X\|_\fr}{\sqrt{s}\|X_{i,*}\|}e_i$ with probability $\frac{\|X_{i,*}\|^2}{\|X\|_\fr^2}$.
    Then
    \begin{align*}
    \E[\|X^\dagger S^\dagger SY - X^\dagger Y\|_\fr^2] \leq & ~ \frac{1}{ s} \|X\|_\fr^2\|Y\|_\fr^2. 
    \end{align*}
\end{lemma}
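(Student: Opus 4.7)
The plan is to reduce the claim to a direct second-moment computation, exploiting that $S^\dagger S$ is a sum of $s$ i.i.d.\ rank-one contributions. First I would unpack the sampling: each row of $S$ equals $\frac{\|X\|_\fr}{\sqrt{s}\|X_{r_i,*}\|} e_{r_i}^\dagger$ for an i.i.d.\ index $r_i$ with law $\Pr[r_i = r] = \|X_{r,*}\|^2/\|X\|_\fr^2$. Plugging in, each entry of the estimator becomes
\[ (X^\dagger S^\dagger S Y)_{j,k} = \sum_{i=1}^s Z_i(j,k), \qquad Z_i(j,k) := \frac{\|X\|_\fr^2}{s\|X_{r_i,*}\|^2}\overline{X_{r_i,j}}Y_{r_i,k}, \]
so each entry is a sum of $s$ i.i.d.\ random variables, which is the structure one wants for a variance argument.

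The second step is the standard unbiasedness check: a one-line expectation shows that the importance-sampling weight $\|X\|_\fr^2/\|X_{r,*}\|^2$ cancels the sampling probability $\|X_{r,*}\|^2/\|X\|_\fr^2$, giving $\E[Z_i(j,k)] = \frac{1}{s}(X^\dagger Y)_{j,k}$ and hence $\E[(X^\dagger S^\dagger S Y)_{j,k}] = (X^\dagger Y)_{j,k}$. This lets me rewrite $\E\|X^\dagger S^\dagger S Y - X^\dagger Y\|_\fr^2$ as the sum of per-entry variances, and by independence across $i$ this equals $s \sum_{j,k}\Var[Z_1(j,k)]$.

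Next I would bound each variance by the raw second moment $\E[|Z_1(j,k)|^2]$, which expands to $\frac{\|X\|_\fr^2}{s^2}\sum_r \frac{|X_{r,j}|^2|Y_{r,k}|^2}{\|X_{r,*}\|^2}$. Summing over $(j,k)$ and swapping the order of summation is the key algebraic moment: $\sum_j |X_{r,j}|^2 = \|X_{r,*}\|^2$ cancels the denominator, and the remaining $\sum_k |Y_{r,k}|^2 = \|Y_{r,*}\|^2$ sums over $r$ to $\|Y\|_\fr^2$. Collecting constants delivers the claimed $\frac{\|X\|_\fr^2\|Y\|_\fr^2}{s}$.

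There is essentially no obstacle; the computation is a textbook Drineas--Kannan--Mahoney-style variance bound. The only point worth flagging is that the specific choice of weight $\|X\|_\fr/(\sqrt{s}\|X_{i,*}\|)$ is exactly what simultaneously (i) makes the estimator unbiased and (ii) causes the $\|X_{r,*}\|^{-2}$ factor in the variance to be absorbed by $\sum_j |X_{r,j}|^2$; any deviation from this choice would leave a residual $r$-dependent factor that prevents the triple sum from collapsing to $\|X\|_\fr^2\|Y\|_\fr^2$.
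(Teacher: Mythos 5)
Your proof is correct and is essentially the standard argument from Drineas--Kannan--Mahoney (the paper itself does not reprove this lemma; it cites \cite[Lemma~4]{dkm06} directly). The computation — unbiasedness from the importance-sampling weight cancelling the sampling probability, per-entry variance via i.i.d.\ additivity, bounding variance by the raw second moment, then the triple-sum collapse using $\sum_j|X_{r,j}|^2 = \|X_{r,*}\|^2$ — matches the cited source's proof.
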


We have $\SQ(A)$, so we can use \cref{mm-approx} with $s \leftarrow 200\frac{\|A\|_\fr^2\|b\|^2}{(\sigma^2+\lambda)^2\eps^2\|x^*\|^2}$, $X \leftarrow A$, and $Y \leftarrow b$, to find an $S$ in $\bigO{s}$ time that satisfies the guarantee
\begin{align}
     \|A^\dagger S^\dagger S b - A^\dagger b\| &\leq \sqrt{\frac{200}{s}} \|A\|_\fr\|b\| = \eps(\sigma^2+\lambda)\|x^*\| \label{eqn:mm-approx-1}
     \intertext{with probability $\geq 0.995$, using Markov's inequality.
     Recall that $D$ is defined to be the diagonal matrix with $D_{i,i} = \|A_{i,*}\|$, so $D^\dagger = D$, $\|D\|_\fr = \|A\|_\fr$, and importance sampling matrices for $A$ are also importance sampling matrices for $D$.
     Consequently, we can apply \cref{mm-approx} with the same argument to conclude that, with probability $\geq 0.995$,}
     \|D S^\dagger S b - D b\| &\leq \sqrt{\frac{200}{s}} \|D\|_\fr\|b\| = \eps(\sigma^2+\lambda)\|x^*\| \label{eqn:mm-approx-2}
\end{align}
By union bound, both \cref{eqn:mm-approx-1} and \cref{eqn:mm-approx-2} hold \emph{for the same $S$} with probability $\geq 0.99$.
Assuming \cref{eqn:mm-approx-1,eqn:mm-approx-2} hold, we can perform SGD as in \cref{thm:mainFixed} on $\hat{b} := S^\dagger Sb$ (which is $s$-sparse) to find an $x$ such that $\|x - \hat{x}^*\| \leq \eps\|\hat{x}^*\|$, where $\hat{x}^*$ is the optimum $(A^\dagger A + \lambda I)^{-1}A^\dagger\hat{b}$.
This implies that $\|x - x^*\| \leq 2\eps\|x^*\|$, since by \cref{eqn:mm-approx-1},\footnote{We obtain the constants in \Cref{alg:SGDHHL} by using the error parameter $\hat{\eps}:=\eps/2$ to satisfy $\|x - x^*\| \leq 2\hat{\eps}\|x^*\|=\eps\|x^*\|$.}
\begin{align} \label{eqn:sparsify-solutions-close}
    \|\hat{x}^* - x^*\| = \|(A^\dagger A+ \lambda I)^+ A^\dagger(\hat{b} - b)\|
    \leq \frac{1}{\sigma^2+\lambda}\|A^\dagger(\hat{b} - b)\| \leq \eps\|x^*\|.
\end{align}
To bound the running times of $\SQ(x)$, we modify the analysis of $\Delta$ from \cref{sec:sq-out}: recalling from \cref{eqn:del-boundFixed}, we have that, with probability $\geq 0.9$,
\begin{align} \label{eqn:cancel-with-approx}
    \Delta &\lesssim
    \frac{\log^2(1/\eps)}{(\sigma^2 + \lambda)^2}\left(\frac{\|\hat{b}\|_D^2}{\|\hat{x}^*\|^2} + \|A\|^2_\fr\|A\|^2\right)
    \lesssim \frac{\log^2(1/\eps)}{(\sigma^2 + \lambda)^2}\left(\frac{\|b\|_D^2}{\|x^*\|^2} + \|A\|^2_\fr\|A\|^2\right).
\end{align}
In other words, the same bound on $\Delta$ from \cref{eqn:del-boundFixed} holds after approximating $b$ by $\hat{b}$.
The last step follows from the following upper bound on $\frac{\|\hat{b}\|_D^2}{\|\hat{x}^*\|^2}$:
\begin{multline*}
    \frac{\|\hat{b}\|_D^2}{\|\hat{x}^*\|^2}
    \leq \frac{\|\hat{b}\|_D^2}{\|x^*\|^2(1-\eps)^2}
    = \frac{\|DS^\dagger Sb\|^2}{\|x^*\|^2(1-\eps)^2}
    \leq \frac{(\|Db\| + \|Db - DS^\dagger Sb\|)^2}{\|x^*\|^2(1-\eps)^2} \\
    \lesssim \frac{\|Db\|^2 + \eps^2(\sigma^2+\lambda)^2\|x^*\|^2}{\|x^*\|^2(1-\eps)^2}
    \lesssim \frac{\|b\|_D^2}{\|x^*\|^2} + \eps^2(\sigma^2+\lambda)^2
\end{multline*}
where the first inequality follows from \cref{eqn:sparsify-solutions-close}, the second inequality is triangle inequality, the third inequality follows from \cref{eqn:mm-approx-2}, and the last inequality holds for $\eps\leq \frac{1}{2}$ which we can assume without loss of generality.
The added term of $\eps^2(\sigma^2+\lambda)^2$ is dominated by the $\|A\|_\fr^2\|A\|^2$ term in \cref{eqn:cancel-with-approx}.

Finally, we observe the following fact about $\|x^*\|$ which can be used to convert the runtime of \cref{main-nonsparse} to other parameters.
\begin{fact}\label{eqn:xstarLower}
$\|x^*\| \geq \frac{\|A\|}{\|A\|^2 + \lambda}\|\Pi_{A,\lambda}b\|$ when $\|A\| \geq \sqrt{\lambda}$, and $\|x^*\| = \frac{1}{2\sqrt{\lambda}}\|\Pi_{A,\lambda}b\|$ otherwise.
\begin{align}
    \|x^*\| &= \|(A^\dagger A + \lambda I)^+ A^\dagger b\| \nonumber\\
    &= \|(A^\dagger A + \lambda I)^+ A^\dagger\Pi_{A,\lambda}^+\Pi_{A,\lambda} b\| \tag*{since $A^\dagger = A^\dagger \Pi_{A,\lambda}^+ \Pi_{A,\lambda}$}\nonumber\\
    &\geq \|\Pi_{A,\lambda} b\|\min_{\substack{v \in \operatorname{span}(A^\dagger) \\ \|v\| = 1}} \|(A^\dagger A + \lambda I)^+ A^\dagger \Pi_{A,\lambda}^+ v\|\label{eq:projectedIncresae}
\end{align}
The $\min$ term is equal to the minimum non-zero singular value of $(A^\dagger A + \lambda I)^+ A^\dagger \Pi_{A,\lambda}^+$.
The non-zero singular values of this expression are $g(\sigma_i)$, where $\sigma_i$ is a non-zero singular value of $A$ and
\begin{align*}
    g(\sigma) = \frac{\sigma}{(\sigma^2+\lambda)(p_{A,\lambda}(\sigma))}
    = \begin{cases}
        \frac{1}{2\sqrt{\lambda}} & 0 < \sigma \leq \sqrt{\lambda} \\
        \frac{\sigma}{\sigma^2+\lambda} & \sqrt{\lambda} < \sigma.
    \end{cases}
\end{align*}
    This function is non-increasing, so the minimum singular value of $(A^\dagger A + \lambda I)^+ A^\dagger \Pi_{A,\lambda}^+$ is $g(\|A\|)$, proving the above fact by observing that for $\|A\| \leq \sqrt{\lambda}$ the operator $(A^\dagger A + \lambda I)^+ A^\dagger \Pi_{A,\lambda}^+$ is proportional to a projector, thus \cref{eq:projectedIncresae} becomes an equality.
\end{fact}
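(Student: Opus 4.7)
The plan is to diagonalize both $x^*$ and $\Pi_{A,\lambda}b$ using the SVD of $A$, thereby reducing the claim to a pointwise comparison on each singular value, which I would then settle by a short monotonicity argument on an auxiliary scalar function.

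First, I would write $A = \sum_i \sigma_i u_i v_i^\dagger$ and observe that the rowspace of $A^\dagger$ coincides with the image of $\Pi_{A,\lambda}$ (both are the span of the $u_i$ with $\sigma_i > 0$). Consequently $A^\dagger = A^\dagger \Pi_{A,\lambda}^+ \Pi_{A,\lambda}$, and so $x^* = M \cdot \Pi_{A,\lambda} b$ where $M := (A^\dagger A + \lambda I)^+ A^\dagger \Pi_{A,\lambda}^+$. Since $\Pi_{A,\lambda}b$ lies in the span of the $u_i$ with $\sigma_i > 0$ (which is exactly the support of $M$), the bound $\|x^*\| \geq \sigma_{\min}(M) \cdot \|\Pi_{A,\lambda}b\|$ follows from the standard operator-norm estimate $\|Mv\| \geq \sigma_{\min}(M)\|v\|$ restricted to $M$'s support, where $\sigma_{\min}(M)$ denotes the least nonzero singular value.

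Second, reading off the singular values of $M$ directly from the SVD, each nonzero $\sigma_i$ contributes a singular value $g(\sigma_i) := \sigma_i / \bigl[(\sigma_i^2+\lambda)\, p_{A,\lambda}(\sigma_i)\bigr]$ to $M$. Substituting the piecewise definition of $p_{A,\lambda}$, I find $g(\sigma) = \frac{1}{2\sqrt{\lambda}}$ for $\sigma \in (0, \sqrt{\lambda}]$ and $g(\sigma) = \frac{\sigma}{\sigma^2+\lambda}$ for $\sigma > \sqrt{\lambda}$. A quick derivative computation $\frac{d}{d\sigma}\frac{\sigma}{\sigma^2+\lambda} = \frac{\lambda-\sigma^2}{(\sigma^2+\lambda)^2}$ shows the second branch is decreasing on $(\sqrt{\lambda}, \infty)$, and the two branches agree at $\sigma = \sqrt{\lambda}$, so $g$ is non-increasing on $(0, \infty)$.

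Third, since every nonzero $\sigma_i \leq \|A\|$, the minimum is $\sigma_{\min}(M) = g(\|A\|)$, which is $\frac{\|A\|}{\|A\|^2+\lambda}$ when $\|A\| \geq \sqrt{\lambda}$ and $\frac{1}{2\sqrt{\lambda}}$ otherwise, matching the two cases in the statement. For the equality claim in the regime $\|A\| \leq \sqrt{\lambda}$, I would use that $g$ is \emph{constant} on $(0, \sqrt{\lambda}]$, so $M$ restricted to its support is $\frac{1}{2\sqrt{\lambda}}$ times a partial isometry, turning the inequality into an equality on any vector in that support, in particular on $\Pi_{A,\lambda}b$. The only place where care is needed is the bookkeeping of images and supports so that the appeal to $\sigma_{\min}(M)$ is legitimate; beyond that, the argument is routine calculus on $g$.
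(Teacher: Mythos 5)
Your proof is correct and follows the same route as the paper: factor through $A^\dagger = A^\dagger \Pi_{A,\lambda}^+ \Pi_{A,\lambda}$, read off the nonzero singular values $g(\sigma_i) = \sigma_i / \bigl[(\sigma_i^2+\lambda)\,p_{A,\lambda}(\sigma_i)\bigr]$ of $M = (A^\dagger A + \lambda I)^+ A^\dagger \Pi_{A,\lambda}^+$ from the SVD, observe $g$ is non-increasing so the least nonzero singular value is $g(\|A\|)$, and note that when $\|A\|\le\sqrt{\lambda}$ the operator is a constant times a partial isometry so the bound becomes an equality. The constant-on-$(0,\sqrt\lambda]$/partial-isometry phrasing is a slightly cleaner way of stating what the paper calls ``proportional to a projector,'' but it is the same observation.
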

\section*{Acknowledgments}
    E.T.\ thanks Kevin Tian immensely for discussions integral to these results.
    Z.S.\ thanks Runzhou Tao and Ruizhe Zhang for giving helpful comments on the draft.
    
    A.G.\ acknowledges funding provided by Samsung Electronics Co., Ltd., for the project ``The Computational Power of Sampling on Quantum Computers'', and additional support by the Institute for Quantum Information and Matter, an NSF Physics Frontiers Center (NSF Grant PHY-1733907), as well as support by ERC Consolidator Grant QPROGRESS, by QuantERA project QuantAlgo 680-91-034 and also by the EU's Horizon 2020 Marie Skłodowska-Curie program 891889-QuantOrder.
    Z.S.\ was partially supported by Schmidt Foundation and Simons Foundation.
    E.T.\ is supported by the National Science Foundation Graduate Research Fellowship Program under Grant No. DGE-1762114.

\bibliographystyle{quantum}
\bibliography{ref}

\appendix
\section{Stochastic gradient bounds}

\grad*

\accvariance*

\begin{proof}
\hfill
\paragraph{Part 1.} First, we show $\nabla g(x)$ is unbiased.
\begin{align*}
    \E[ \nabla g(x) ] 
    = & ~ \E\Big[\frac{1}{C}\sum_{j=1}^C \frac{\nrm{A}_\fr^2}{\nrm{A_{r,*}}^2}\frac{\|A_{r,*}\|^2}{|A_{r,c_j}|^2} A_{r,c_j}x_{c_j} (A_{r,*})^\dagger\Big] - A^\dagger b + \lambda x
    \tag*{by definition of $\nabla g$} \\
    = & ~ \E_{r, c_1} \Big[\frac{\|A\|_\fr^2}{|A_{r,c_1}|^2} A_{r,c_1}x_{c_1} (A_{r,*})^\dagger\Big] - A^\dagger b + \lambda x
    \tag*{since the $c_i$'s are i.i.d.} \\
    = & ~ \Big(\sum_{r}\sum_{c_1} \frac{\|A_{r,c_1}\|^2 }{\|A\|_\fr^2} \frac{\|A\|_\fr^2}{\|A_{r,c_1}\|^2}  A_{r,c_1}x_{c_1} (A_{r,*})^\dagger\Big) - A^\dagger b + \lambda x \\
    = & ~ A^\dagger Ax - A^\dagger b + \lambda x = \nabla f(x).
\end{align*}

\paragraph{Part 2.}
Now variance.
We use that, for a random vector $v$, $\Var[v] = \E[\nrm{v - \E[v]}^2] = \E[\nrm{v}^2] - \nrm{\E[v]}^2$.
\begin{align}
    \Var[  \nabla g(x) ]
    & = \Var[  \nabla g(x)  + A^\dagger b - \lambda x ]\nonumber \\
    & = \E[\|\nabla g(x) + A^\dagger b - \lambda x - \E[\nabla g(x) + A^\dagger b - \lambda x]\|^2]
    \tag*{by definition of $\Var[\cdot]$}\nonumber \\
    &= \Var\Big[\frac{\|A\|_\fr^2}{\|A_{r,*}\|^2}\Big(\frac1C\sum_{j=1}^C\frac{\|A_{r,*}\|^2}{\abs{A_{r,c_{j}}}^2} A_{r,c_{j}}x_{c_{j}}\Big) (A_{r,*})^\dagger\Big] 
    \tag*{by definition of $\nabla g$}\nonumber\\
    &= \Var\Big[\Big(\frac1C\sum_{j=1}^C\frac{\|A\|_\fr^2}{\abs{A_{r,c_{j}}}^2} A_{r,c_{j}}x_{c_{j}}\Big) (A_{r,*})^\dagger\Big]\nonumber \\
    &= \E\Big[\norm[\Big]{\Big(\frac1C\sum_{j=1}^C\frac{\|A\|_\fr^2}{\abs{A_{r,c_{j}}}^2} A_{r,c_{j}}x_{c_{j}}\Big) (A_{r,*})^\dagger}^2\Big] - \norm{A^\dagger Ax}^2 .\nonumber \\
    &= \underbrace{\E\Big[\abs[\Big]{\frac1C\sum_{j=1}^C\frac{\|A\|_\fr^2}{\abs{A_{r,c_{j}}}^2} A_{r,c_{j}}x_{c_{j}}}^2 \norm{A_{r,*}}^2\Big]}_V - \norm{A^\dagger Ax}^2 . \label{eqn:v-def}
\end{align}
We expand the first term, using that it is an average of i.i.d.\ random variables:
\begin{align*}
    V
    &= \sum_{i=1}^m \frac{\|A_{i,*}\|^2}{\|A\|_\fr^2} \E_{c_1,\ldots,c_C}\Big[\abs[\Big]{\frac1C\sum_{j=1}^C\frac{\|A\|_\fr^2}{|A_{i,c_{j}}|^2} A_{i,c_{j}}x_{c_{j}}}^2\Big] \norm{A_{i,*}}^2 \\
    &= \|A\|_\fr^2\sum_{i=1}^m \E_{c_1,\ldots,c_C}\Big[\abs[\Big]{\frac1C\sum_{j=1}^C\frac{\|A_{i,*}\|^2}{|A_{i,c_{j}}|^2} A_{i,c_{j}}x_{c_{j}}}^2\Big]  \\
    &= \|A\|_\fr^2\sum_{i=1}^m \Big(\frac{1}{C}\Var_{c_1}\Big[\frac{\|A_{i,*}\|^2}{|A_{i,c_{1}}|^2} A_{i,c_{1}}x_{c_{1}}\Big] + \abs{A_{i,*}x}^2\Big)  \\
    &= \|A\|_\fr^2\sum_{i=1}^m \Big(\frac{1}{C}\sum_{j = 1}^n\Big(\frac{\abs{A_{i,j}}^2}{\|A_{i,*}\|^2}\frac{\|A_{i,*}\|^4}{|A_{i,j}|^4} \abs{A_{i,j}x_{j}}^2\Big) - \frac{1}{C}\abs{A_{i,*}x}^2 + \abs{A_{i,*}x}^2\Big)  \\
    &= \|A\|_\fr^2\sum_{i=1}^m \Big(\frac{1}{C}\|A_{i,*}\|^2\|x\|^2 + \Big(1-\frac{1}{C}\Big)(A_{i,*}x)^2\Big)  \\
    &= \frac{1}{C}\|A\|_\fr^4\|x\|^2 + \Big(1-\frac{1}{C}\Big)\|A\|_\fr^2\|Ax\|^2
\end{align*}

\paragraph{Part 3.}
Finally, the expression for $\E[\|\nabla g(x) - \nabla g(y)\|^2]$ follows from the observation that $\nabla g(x) - \nabla g(y)$ is simply the expression for $\nabla g(x-y)$, taking $b$ to be the zero vector.
\end{proof}

\variancevariant*

\begin{proof} \hfill
\paragraph{Part 1.}
We first show the expectation,
\begin{align*}
    \E[\nabla \tilde{g}(v)] &= \E\Big[\frac{\|A\|_\fr^2}{\|A_{r,*}\|^2}\Big(\frac1C\sum_{j=1}^C\frac{\|A_{r,*}\|^2}{|A_{r,c_j}|^2} A_{r,c_j}(A_{*,c_j})^\dagger v\Big)e_r - b + \lambda v\Big] \\
    &= \sum_{r=1}^m\E_{c_1,\ldots,c_C}\Big[\frac1C\sum_{j=1}^C\frac{\|A_{r,*}\|^2}{|A_{r,c_j}|^2} A_{r,c_j}(A_{*,c_j})^\dagger v\Big]e_r - b + \lambda v \\
    &= \sum_{r=1}^m\sum_{c=1}^n A_{r,c}((A_{*,c})^\dagger v) e_r - b + \lambda v \\
    &= AA^\dagger v - b + \lambda v.
\end{align*}
\paragraph{Part 2.}
For the variance in the $D$ norm, we reuse a computation from the proof of \cref{lem:acc-variance}:
\begin{align*}
    \E[\|\nabla \tilde{g}(v) - \E[\nabla \tilde{g}(v)]\|_D^2]
    &= \Var\Big[D\frac{\|A\|_\fr^2}{\|A_{r,*}\|^2}\Big(\frac1C\sum_{j=1}^C\frac{\|A_{r,*}\|^2}{|A_{r,c_j}|^2} A_{r,c_j}(A_{*,c_j})^\dagger v\Big) e_r\Big] \\
    &= \underbrace{\E\Big[\|A_{r,*}\|^2\abs[\Big]{\frac1C\sum_{j=1}^C\frac{\|A\|_\fr^2}{\abs{A_{r,c_j}}^2} A_{r,c_j}(A_{*,c_j})^\dagger v}^2\Big]}_{V \text{ from \cref{eqn:v-def}, where $x = A^\dagger v$}} - \norm{AA^\dagger v}_D^2 \\
    &= \frac{1}{C}\|A\|_\fr^4\|A^\dagger v\|^2 + \Big(1-\frac{1}{C}\Big)\|A\|_\fr^2\|AA^\dagger v\|^2 - \norm{AA^\dagger v}_D^2 \\
    &\leq 2\|A\|_\fr^2\|A\|^2\|A^\dagger v\|^2. \tag*{\qedhere}
\end{align*}
\end{proof}

\end{document}